\documentclass[draft]{svjour3}

\usepackage[utf8]{inputenc} 
\usepackage[T1]{fontenc}    %
\usepackage[english]{babel} 
\usepackage{graphicx} 
\usepackage{xcolor}
\usepackage{hyperref} 
\usepackage{comment}
\usepackage{amsmath,amsfonts,amssymb,bm} 

\newcommand{\dv}{\mathrm d}
\newcommand{\tr}{\mathrm {Tr}}
\newcommand{\diag}{\mathrm {diag}}
\newcommand{\E}{\mathbb {E}}
\newcommand{\U}{\mathrm {U}}
\newcommand{\s}{\mathcal {S}}

\newcommand{\Z}{\mathbb {Z}}
\newcommand{\C}{\mathbb {C}}
\newcommand{\eins}{\leavevmode\hbox{\small1\kern-3.8pt\normalsize1}}





\begin{document}

\title{Harmonic analysis for rank-$1$ Randomised Horn Problems}

\author{ Jiyuan Zhang \and Mario Kieburg \and Peter J. Forrester}
\institute{J. Zhang \and M. Kieburg \and P. J. Forrester \at School of Mathematics and Statistics, The University of Melbourne, Victoria 3010, Australia \\  \email{jiyuanz@student.unimelb.edu.au, m.kieburg@unimelb.edu.au, pjforr@unimelb.edu.au}}
\date{\today}

\maketitle

\begin{abstract}  
The randomised Horn problem, in both its additive and multiplicative version, has recently drawn increasing interest. Especially, closed analytical results have been found for the rank-$1$ perturbation of sums of Hermitian matrices and products of unitary matrices. We will generalise these results to rank-$1$ perturbations for products of positive definite Hermitian matrices and prove the other results in a new unified way. Our ideas work along harmonic analysis for matrix groups via spherical transforms that have been successfully applied in products of random matrices in the past years. In order to achieve the unified derivation of all three cases, we define the spherical transform on the unitary group and prove its invertibility.

\subclass{15B52, 60B20, 42B10}
\keywords{random matrix theory, Horn problem, sums and product with rank-$1$ matrices, harmonic analysis}
\end{abstract}

\section{Introduction and Main Results}\label{S1}

In 1962, Horn~\cite{Ho62} raised a question on finding the possible range of the eigenvalues $\mathbf c=\diag(c_1,\cdots,c_n)$ of the sum $C=A+B$ of two fixed $n\times n$ Hermitian matrices $A$ and $B$ whose eigenvalues $\mathbf a=\diag(a_1,\cdots,a_n)$ and $\mathbf b=\diag(b_1,\cdots,b_n)$ are given. Unfortunately, only one equation can be exactly found in the general setting, namely the traces of the matrices
\begin{equation}\label{1.01}
\tr\, A +\tr\, B=\sum_{j=1}^n(a_j+b_j)=\sum_{j=1}^nc_j=\tr\, C,
\end{equation} 
while all other relations can be only expressed in inequalities. These inequalities give a bounded domain on the hyperplane defined by the trace condition~\eqref{1.01}. In the case of $B$ being of rank $1$, the inequalities simplify to the Cauchy interlacing condition,
\begin{equation}\label{1.02}
c_1\ge a_1\ge c_2\ge a_2\ge\ldots\geq c_n\ge a_n,
\end{equation}
when imposing the ordering $c_1\ge \ldots\ge c_n$ and $a_1\ge \ldots\ge a_n$.
This set of inequalities for the general case was proved to be necessary and sufficient by Knutson and Tao~\cite{KT99}, using a combinatorial method.

Horn's problem can be encountered in various fields, such as in representation theory~\cite{Klyachko1998,Fulton2000,Coquereaux2017}, combinatorics~\cite{King2006}, algebraic geometry~\cite{Knutson2000}, quantum information~\cite{Klyachko2004,Zhang2002,Zhang2019} and, indeed, linear algebra~\cite{Bhatia2001}.
In recent years a randomised version of Horn's problem has been considered where $\mathbf a$ and $\mathbf b$ are fixed, while the diagonalizing unitary matrices $U,V$, i.e. $A=U\mathbf aU^\dagger$ and $B=V\mathbf bV^\dagger$ with ``$\dagger$'' the Hermitian adjunct, are drawn from the Haar measure $\mu(\dv U)$ of the unitary group $\U(n)$. Horn's question can then be rephrased to an explicit expression for the joint eigenvalue probability density of the eigenvalues $\mathbf c$. There are general discussions on this particular randomised Horn problem~\cite{Zu18} as well as specialisations to a rank $1$ matrix $\mathbf b$, see~\cite{FG06,Fa19,FZ19}. In the latter case, a closed analytic expression of the joint eigenvalue density is accessible; see also~\eqref{1.1}.

In~\cite{Zu18,FZ19}, an harmonic analysis approach via the  Fourier transform has been suggested. It works along the same ideas of characteristic functions in probability theory, where the density of the sum of independent random variables can be obtained by taking the inverse transform of the product of their corresponding characteristic functions. This approach is commonly seen in proofs of the central limit theorem~\cite{GK49}, for example.

Harmonic analysis for matrices, especially groups has been introduced in the 50's and 60's, see, e.g., the textbook by Helgason~\cite{He84}. In this framework the univariate Fourier analysis is generalized to the \textit{spherical transform}. Instead of considering functions defined on the real line, they now live on a coset $G/K$, where $G$ is a semi-simple Lie group and $K$ is its compact subgroup. The coset space is equipped with a group operation of $G$, which allows to construct a convolution theorem with respect to this group action. Recently for sums of matrices this has been applied in~\cite{KR17}. The advantage of this particular kind of harmonic analysis is that it can be extended to multiplicative group actions as well. There are various works developing the spherical transform for product of random matrices~\cite{KK16,KK19,KFI17,Kieburg2019}.

With this analytical tool in mind, one can address a multiplicative version of Horn's problem. One version of such a question is the projection of a Hermitian matrix $A$ to a co-rank $1$ hyperplane, see~\cite{Baryshnikov,Forrester-Rains,FZ19}, where a new and explicit closed expression of the joint eigenvalue density can be derived. We will study a related problem where $A$ and $B$ are either positive definite Hermitian matrices and the product matrix is $C=A^{1/2}BA^{1/2}$ or are unitary with $C=AB$. When $A$ and $B$ are given as before by their fixed eigenvalues $e^{\mathbf{a}}, e^{\mathbf{b}}$ or $e^{i\mathbf{a}}, e^{i\mathbf{b}}$, respectively, and their diagonalizing Haar-distributed unitary matrices $U$ and $V$. The sum on ${\rm Herm}(n)$ is naturally related to its compact group $\U(n)=\exp[i {\rm Herm}(n)]$ and non-compact realization ${\rm Herm}_+(n)=\exp[ {\rm Herm}(n)]$. Therefore, it is not really surprising that all three cases can be treated in the same framework of spherical transforms, though there are subtle differences as we will see. In general, we have as for the sum an exact equality,
\begin{equation}\label{determinant}
\det A \det B=\det C
\end{equation}
replacing the trace condition~\eqref{1.01}, and a system of inequalities. For the randomised co-rank $1$ case, we will prove the following joint eigenvalue densities using spherical transforms.

\begin{theorem}[Randomized Horn's Problem with a Rank-$1$ Matrix]\label{prop}
Let $\chi$ be the indicator function, $\delta(.)$ be the Dirac delta function, $\Delta(\mathbf{a})=\prod_{k>l}(a_k-a_l)$ be the Vandermonde determinant, and choose two real diagonal matrices $\mathbf a=\diag(a_1,\ldots, a_n)$ and $\mathbf b=\diag(b,0,\ldots,0)$ satisfying $a_1>\ldots>a_n$ and $b>0$. In all three cases below, we assume the eigenvalues  $\mathbf c=\diag(c_1,\ldots,c_n)$ of $C$ to be ordered $c_1>\ldots>c_n$, as well. Moreover, we choose a fixed or random $U\in\U(n)$ and a Haar-distributed $V\in\U(n)$.
\begin{enumerate}
	\item (Sum on $\mathrm{Herm}(n)$, see~\cite{FG06,FZ19,Fa19}) The eigenvalues  $\mathbf c$ of $C=U\mathbf aU^\dagger+V\mathbf bV^\dagger$ are distributed according to the joint probability distribution
	\begin{equation}\label{1.1}
		p(\mathbf{c})=\frac{(n-1)!}{b^{n-1}}\frac{\Delta(\mathbf c)}{\Delta(\mathbf a)}\,\delta\left(b+\sum_{j=1}^{n}(a_j-c_j)\right)\chi_{c_1>a_1>\cdots>c_n>a_n},
	\end{equation}
	for almost all $\mathbf{c} \in \mathbb{R}^n$ with $\min_{k,l}|a_k-c_l|>0$.

	\item (Product on $\mathrm{Herm}_+(n)$) The eigenvalues $e^{\mathbf{c}}$ of $C=Ue^{\mathbf a/2}U^\dagger Ve^{\mathbf b}V^\dagger Ue^{\mathbf a/2}U^\dagger$ follow the joint probability distribution
	\begin{equation}\label{1.2}
		p(e^{\mathbf{c}})=\frac{(n-1)!}{(e^b-1)^{n-1}}\frac{\Delta(e^{\mathbf c})}{\Delta(e^{\mathbf a})}\,\delta\left(b+\sum_{j=1}^{n}(a_j-c_j)\right)\chi_{c_1>a_1>\cdots>c_n>a_n>0},
	\end{equation}
	for almost all $\mathbf{c}\in\mathbb{R}^n$ with $\min_{k,l}|a_k-c_l|>0$.
	
	\item (Product on $\U(n)$, see~\cite{FZ19}) Restricting $a_1,\cdots, a_n,b\in(-\pi,\pi]$, the eigenvalues  $e^{i\mathbf{c}}$ of $C=Ue^{i\mathbf a}U^\dagger Ve^{\mathbf b}V^\dagger$ have the joint probability distribution
	\begin{equation}\label{1.3}
		p(e^{i\mathbf{c}})=\frac{i^{n-1}(n-1)!}{(e^{ib}-1)^{n-1}}\frac{\Delta(e^{i\mathbf c})}{\Delta(e^{i\mathbf a})}\,\delta_{2\pi}\left(b+\sum_{j=1}^{n}(a_j-c_j)\right)\chi_{2\pi+a_n\ge c_1>a_1>\cdots>c_n>a_n},
	\end{equation}
	for almost all $\mathbf{c}\in(a_n,2\pi+a_n)^n$ with $\min_{k,l}{\rm mod}_{2\pi}(a_k-c_l)>0$ and $\max_{k,l}$ ${\rm mod}_{2\pi}(a_k-c_l)<2\pi$. Here, $\delta_{2\pi}$ is the $2\pi$-periodic Dirac delta function and ${\rm mod}_{2\pi}$ is the modulus with respect to $2\pi$.
\end{enumerate}
\end{theorem}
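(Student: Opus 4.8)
The plan is to obtain all three densities from a single scheme: reduce the eigenvalue law of $C$ to a spherical convolution, factorise it under the spherical transform, and invert. First I would remove the matrix $U$. In case (1) write $C=U\bigl(\mathbf a+(U^\dagger V)\mathbf b(U^\dagger V)^\dagger\bigr)U^\dagger$, and in cases (2) and (3) the analogous conjugations; since $U^\dagger V$ is Haar-distributed whenever $V$ is (with $U$ fixed or independent), the spectrum of $C$ has the law of $\mathbf a$ ``added'' (resp.\ ``multiplied'') to a $\U(n)$-invariant matrix of fixed spectrum $\mathbf b$, and this law is unchanged if one further averages over a Haar conjugation of $\mathbf a$. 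Hence the eigenvalue density $p$ is the spherical convolution — on $\mathrm{Herm}(n)$, on $\mathrm{GL}_n(\C)/\U(n)\cong\mathrm{Herm}_+(n)$, and on $\U(n)$ respectively — of the two orbit point masses at $\mathbf a$ and $\mathbf b$. Applying the spherical transform $\mathcal S$, invertible on the relevant function class (and on $\U(n)$ by the inversion theorem established above), turns the convolution into a product $\mathcal S[p](\mathbf s)=\phi_{\mathbf s}(\mathbf a)\,\phi_{\mathbf s}(\mathbf b)$ of the two spherical functions (up to conjugation, depending on conventions).

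Next I would insert the explicit spherical functions. In each case $\phi_{\mathbf s}(\mathbf a)$ is a ratio of an $n\times n$ determinant to a product of two Vandermondes — the Itzykson--Zuber (Harish-Chandra) integral on $\mathrm{Herm}(n)$, the Harish-Chandra formula on $\mathrm{GL}_n(\C)/\U(n)$, and the normalised Weyl character on $\U(n)$. The crucial point is that $\mathbf b=\diag(b,0,\dots,0)$ has rank $1$: it is the confluent limit of a generic spectrum in which $n-1$ eigenvalues merge at $0$, so $\phi_{\mathbf s}(\mathbf b)$ collapses from an $n\times n$ determinant to a single sum, schematically $\frac{(n-1)!}{N(b)^{n-1}}\sum_{k=1}^{n}\frac{g(b s_k)}{\prod_{l\ne k}(s_k-s_l)}$, with $g(x)=e^{ix}$ in case (1) (and the corresponding one-variable kernels in (2), (3)), while the normaliser $N(b)$ equals $b$, $e^b-1$, $e^{ib}-1$ in the three cases — the $\Delta(\mathbf b)$-type factor of the Weyl/Jacobian weight, up to an overall $i^{n-1}$ in case (3). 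Feeding this into the inversion integral against $\phi_{\mathbf s}(\mathbf c)$ and the Plancherel density $\propto\Delta(\mathbf s)^2$, the Vandermondes $\Delta(\mathbf s)$ cancel, and an Andréief-type integration formula reduces the $n$-fold $\mathbf s$-integral — a sum over the weight lattice $\Z^n$ in the compact case — to effectively one-dimensional integrals.

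Then the three features of the answer drop out. The prefactor $\Delta(\mathbf c)/\Delta(\mathbf a)$ is essentially free: $\Delta(\mathbf a)^{-1}$ and $\Delta(\mathbf c)^{-1}$ sit inside the spherical functions, and the Jacobian $|\Delta(\mathbf c)|^2$ relating the eigenvalue density to the radial matrix density restores one power of $\Delta(\mathbf c)$. The determinant identity $\det A\det B=\det C$ appears because, under the shift $\mathbf s\mapsto\mathbf s+t(1,\dots,1)$, the integrand picks up only the phase $e^{it(\sum_j a_j+b-\sum_j c_j)}$, so integrating over that one direction produces $\delta\bigl(b+\sum_j(a_j-c_j)\bigr)$ — a sum over $t\in2\pi\Z$, hence the periodic $\delta_{2\pi}$, in case (3). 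What remains is the integral over the complementary directions of $\mathbf s$; carried out by contour deformation and residues, the sum of residues collapses — after the combinatorics — to the interlacing indicator $\chi_{c_1>a_1>\dots>c_n>a_n}$ with no further numerical factor, and in case (3) the contours run on the unit circle, which turns this into the wrapped chamber $2\pi+a_n\ge c_1>a_1>\dots>c_n>a_n$.

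The step I expect to be the real obstacle is analytic rather than combinatorial. The spherical transform and its inversion are formulated for sufficiently regular integrable radial functions, whereas here both inputs are singular — a point mass at $\mathbf a$ and the rank-$1$ degenerate object at $\mathbf b$ — so the factorisation and, above all, the inversion must be justified by approximation: spreading the $n-1$ null eigenvalues of $\mathbf b$ to $\varepsilon$-separated values (and/or smearing the point mass at $\mathbf a$), carrying out the computation for generic spectra, and passing to the limit while controlling the interchange of limit and integral, which is delicate precisely on the coalescence locus $\min_{k,l}|a_k-c_l|\to0$ excluded from the statement. A preparatory difficulty, specific to case (3), is that the spherical transform on the compact group $\U(n)$ must first be \emph{defined} and shown invertible — a Fourier/Peter--Weyl inversion with the correct Weyl-invariant normalisation and $2\pi$-periodicity — since it is exactly this structure, absent on $\mathrm{Herm}(n)$ and $\mathrm{Herm}_+(n)$, that produces the periodic delta and the wrapped interlacing chamber; providing it is what lets all three cases, including the new case (2), be handled on one footing.
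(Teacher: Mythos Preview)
Your overall scheme matches the paper's proof closely: spherical-transform factorisation, rank-$1$ collapse of $\phi_{\mathbf s}(\mathbf b)$ to a single sum with normaliser $b$, $e^b-1$, $e^{ib}-1$ respectively, a generalised Andr\'eief identity to reduce the $n$-fold $\mathbf s$-integral to one-dimensional pieces, and extraction of the Dirac delta from the diagonal direction in $\mathbf s$. The paper also, exactly as you anticipate, first defines the spherical transform on $\U(n)$ and proves its inversion (via Poisson summation and a heat-kernel argument).

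Two technical choices differ, and the first one matters. For regularisation the paper does \emph{not} spread the null eigenvalues of $\mathbf b$: that would reinstate a full $\Delta(\mathbf b)$ in the denominator and destroy the very rank-$1$ collapse that makes the calculation tractable. Instead it convolves with an auxiliary heat-kernel ensemble $H_\varepsilon$ --- GUE in case~(1), a log-normal Muttalib--Borodin ensemble in case~(2), the Brownian-motion heat kernel on $\U(n)$ in case~(3) --- whose spherical transform is Gaussian in $\mathbf s$. This keeps the rank-$1$ structure intact while making every $\mathbf s$-integral (or sum) absolutely convergent and Fubini applicable. Second, the interlacing indicator is obtained not via residues but by Plancherel's theorem: after Andr\'eief the $(j,k)$ entry of the determinant is $\int_{\mathbb R}\frac{e^{-\varepsilon^2 s^2+i(c_k-a_j)s}}{s-(s_1+iy)}\frac{ds}{2\pi i}$, which Plancherel converts to essentially $\mathrm{erfc}\bigl(\frac{a_j-c_k}{2\varepsilon}-i\varepsilon s_1\bigr)$, and the limit $\varepsilon\to0$ sends this to $2\Theta(c_k-a_j)$ with an exponentially small error controlled precisely by $\min_{k,l}|a_k-c_l|$. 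Your residue picture would give the same Heaviside after a contour shift, but the Gaussian damping is what makes the uniform error bounds --- the actual analytic content of the excluded set $\min_{k,l}|a_k-c_l|=0$ --- clean to write down, especially in case~(3) where the sums over $\Z$ require an additional Jacobi-theta/Poisson step before the same mechanism kicks in.
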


The ordering of $c_1,\ldots,c_n$ and $a_1,\ldots,a_n$ is not necessary and one can give permutation invariant expressions for these three densities as we do in~\eqref{1.1a},~\eqref{1.2a} and~\eqref{1.3a}, respectively. The restriction $\min_{j,k}|a_j-c_k|>0$ is a technical detail in the proof, but it is born out of the problematic limit $b\to0$ where the three expressions~\eqref{1.1},~\eqref{1.2} and~\eqref{1.3} seem to diverge. The interplay of the indicator function and the apparent singularity at $b=0$ is highly non-trivial and is reflected in a breakdown of our proof. Indeed, when $b\neq0$ or excluding a multiple of $2\pi$ for the third case then $a_j=c_k$ for some $j$ and $k$ describes only a set of measure zero which can be excluded without loss of generality.

The three cases of Theorem~\ref{prop} are proven in  sections~\ref{S2},~\ref{S3} and~\ref{S4}, respectively. Therein, we briefly recall  the spherical transform on ${\rm Herm}(n)$ and ${\rm Herm}_+(n)$ and introduce it for $\U(n)$. Specifically, in relation to the latter case, the spherical functions are the normalised irreducible characters of $\U(n)$ while $(\U(n)\times \U(n), \diag(\U(n)\times \U(n)))$ is a Gelfand pair. Yet, we are not aware that the explicit inversion formula of the spherical transform for this case has been reported in literature. The proof for the sum of two matrices is known~\cite{FG06,FZ19,Fa19} and shall serve as an illustration for the main steps of the proof for the other two cases. The proof of the known result~\eqref{1.3} is an alternative one given in~\cite{FZ19} where the Harish-Chandra--Itzykson--Zuber integral~\cite{HC57,IZ80} has been employed. In spite of this integral being related to harmonic analysis, it is not the same as that naturally encountered for products on ${\rm Herm}_+(n)$, which are the characters. The result~\eqref{1.2} is completely new.

Discretization of our approach to the eigenvalues of the sums and products of random matrix theory can be identified in the literature. Here we highlight the representation theoretical one pursued in~\cite{Vadim,Vadim2,Vadim3,Vadim4}, where various stochastic processes, like the GUE corner process, the Jacobi product process or the Schur process, have been studied, together with their continuum limits. Thus our generalizations of the Fourier transformation certainly encodes the decomposition of the distributions in irreducible representations of the corresponding group action; this becomes especially clear in the case of the multiplication on $\U(n)$. The approach in~\cite{Vadim,Vadim2,Vadim3,Vadim4} considers the problem of products and sums of matrices on the level of Schur polynomials which are the characters of the irreducible representations of $\U(n)$. Therefore, Theorem~\ref{prop} can be anticipated also from this point of view.

In brevity, we would like to mention our notation throughout the present work. The Lebesgue measure on a flat matrix space like the Hermitian matrices or diagonal matrices is denoted by $(\dv X)$ and is a product of all its real independent differentials, e.g. $(\dv X):=\prod_{1\le j\le k\le n}\dv x_{jk}^{(r)}\prod_{1\le j< k\le n}\dv x_{jk}^{(i)}$. In contrast, the normalized Haar measure such as of the unitary group $\U(n)$ or its cosets is denoted by $\mu(\dv U)$. Distributions of random variables are indicated by subscripts of the function such as $f_{X}$ for the random variable $X$. The space of absolutely integrable functions is denoted by $L^1$, while the space of sequences whose series is absolutely convergent is denoted by $l^1$.

\section{Additive Horn Problem on $\mathrm{Herm}(n)$}\label{S2}

The univariate Fourier transform for a real random variable $X$ with probability density $f_{X_1}(x)$ is the characteristic function \footnote{In probability theory, conventionally a characteristic function is defined as $\E_X[e^{itX}]$ instead.}
\begin{equation}
\E_{X_1}[e^{-itX_1}]=\int_\mathbb{R} f_{X_1}(x)e^{-itx_1}\dv x_1,
\end{equation}
where $\E[.]$ is the expectation value. When considering the sum of two independent random variables $X_1+X_2$, the characteristic function is the product of the two corresponding ones, i.e.,
\begin{equation}
\E_{X_1,X_2}[e^{-it(X_1+X_2)}]=\E_{X_1}[e^{-itX_1}]\E_{X_2}[e^{-itX_2}].
\end{equation}
Thence, the probability density of $X=X_1+X_2$ can be recovered by the inverse Fourier transform,
\begin{equation}
f_{X_1+X_2}(y)=\frac{1}{2\pi}\int_\mathbb{R}\dv t\, \E_{X_1}[e^{-itX_1}]\E_{X_2}[e^{-itX_2}]\,e^{ity}.
\end{equation}
These steps are at the heart of all three randomized Horn problems, where the Fourier transform is the spherical transform for the additive group of Hermitian matrices and the Fourier factor $e^{-itX}$ is known as the spherical function. In particular,  the additive Horn problem is simply replacing the real random variables by Hermitian random matrices. The only additional input that is not present in the univariate case is the reduction of the Fourier transform to the eigenvalues of the matrices. This will be outlined in subsection~\eqref{sec:Fourier} and applied to the Horn problem in subsection~\ref{sec:Horn-Herm}.

\subsection{Eigenvalue Fourier Transform}\label{sec:Fourier}

We consider a positive normalised $L^1$-function $f_X$ being the probability density of the random variable $X$. Its matrix Fourier transform is
\begin{equation}\label{2.1}
	\hat f_X(S):=\E_X\left[\exp(-i\tr XS)\right]=\int_{\mathrm{Herm}(n)} (\dv X)\, f_X(X)\exp(-i\tr XS),
\end{equation}
for all complex $n\times n$ matrix $S$ for which the integral exists. 
The normalisation is now reflected by $\hat f_X(0)=1$.
The inversion is particularly simple when its Fourier transform is an $L^1$-functions, too, then we can omit a regularization and it takes the form (e.g., see~\cite[Section 4.1 eq. (4.6)]{FZ19})
\begin{equation}\label{2.3}
	f_X(X)=\frac{1}{2^n\pi^{n^2}}\int_{\mathrm{Herm}(n)} (\dv S)\, \hat f_X(S)\exp(i\tr XS).
\end{equation}

The corresponding convolution theorem reads
\begin{equation}\label{2.2}
	\hat f_{X_1+X_2}(S)=\hat f_{X_1}(S)\cdot \hat f_{X_2}(S)
\end{equation}
for two independent random matrices $X_1,X_2\in{\rm Herm}(n)$. Thus, in combination with the inverse~\eqref{2.3} we have again
\begin{equation}\label{2.2.b}
	f_{X_1+X_2}(X)=\int_{{\rm Herm}(n)} \frac{(\dv S)}{2^n\pi^{n^2}}\,\hat f_X(S)\cdot \hat f_Y(S)\exp(i\tr XS).
\end{equation}

So far everything works analogously to the univariate case. Yet, when studying the eigenvalues we have to consider the behaviour of the Fourier transform under the adjoint action of $\U(n)$ on ${\rm Herm}(n)$.
A well-known, but crucial property of the Fourier transform is that invariance under unitary conjugation of the original distribution $f_X(X)=f_X(UXU^\dagger)$ carries over to $\hat f_X$. We will exploit this to diagonalise $X=U\mathbf{x}U^\dagger$ and $S=V\mathbf{s}V^\dagger$ with the eigenvalues $\mathbf{x}=\diag(x_1,\ldots,x_n)$ and $\mathbf{s}=\diag(s_1,\ldots,s_n)$.  To avoid over-counting, $U$ as well as $V$ need to be drawn from the quotient group $\U(n)/\U(1)^n$ and the eigenvalues $\mathbf{x}$ and $\mathbf{s}$ are ordered. However, we can relax both restrictions by properly normalising the integrals which we will do in the following. The measure $(\dv X)$ is, then, decomposed as (see e.g.~\cite[Eq. (1.27) with $\beta=2$]{forrester10})
\begin{equation}\label{2.4}
	(\dv X)=\frac{\pi^{n(n-1)/2}}{\prod_{j=0}^{n}j!}\Delta(\mathbf x)^2(\dv \mathbf x)\mu(\dv U).
\end{equation}
Plugging~\eqref{2.4} into~\eqref{2.1} yields
\begin{equation}\label{2.5}
	\hat f_X(S)=\frac{\pi^{n(n-1)/2}}{\prod_{j=0}^{n}j!}\int_{\mathbb{R}^n}(\dv\mathbf x)\,f_X(\mathbf x)\Delta(\mathbf x)^2\left(\int_{\U(n)} \mu(\dv U)\, \exp(-i\tr U\mathbf xU^\dagger \mathbf s)\right).
\end{equation}
We encounter here the Harish-Chandra--Itzykson--Zuber (HCIZ) integral~\cite{HC57,IZ80},
\begin{equation}\label{2.6}
	\int_{\U(n)}\mu(\dv U)\, \exp\left(\tr U\mathbf xU^\dagger \mathbf s\right)=\prod_{j=0}^{n-1}j!\,\frac{\det\left[e^{x_js_k}\right]_{j,k=1}^n}{\Delta(\mathbf x)\Delta(\mathbf s)}:=\phi(\mathbf x,\mathbf s),
\end{equation}
which is the spherical function for the Fourier transform on the eigenvalues. It reflects the non-trivial metric of the induced space that originates from the eigenvalue decomposition. The corresponding spherical transform is obtained by substituting~\eqref{2.6} into~\eqref{2.5},
\begin{equation}\label{2.7}
	\hat f_X(\mathbf s)=\frac{\pi^{n(n-1)/2}}{\prod_{j=0}^{n}j!}\int_{\mathbb{R}^n}(\dv\mathbf x)\,\Delta(\mathbf x)^2f_X(\mathbf x)\phi(-i\mathbf x,\mathbf s).
\end{equation}
We denote $\s f_X:=\hat f_X(\mathbf s)$ as the spherical transform of $f_X$. Since also the HCIZ-integral $\phi$ is invariant under $\U(n)$ we can also understand it as a function of the full random matrix $X$, in particular $\phi(\mathbf x,\mathbf s)=\phi(X,\mathbf s)$. This implies that the Fourier (spherical) transform has also the form $\hat f_X(\mathbf s)=\s f_X(s)=\E[\phi(-iX,\mathbf s)]$. Defining the transform in this way allows us to relax the unitarily invariance of $f_X$ and can even compute the spherical transform of fixed matrices which is simply the spherical function, meaning the HCIZ-integral in the current case.

Due to $\hat f_X(\mathbf s)=\hat f_X(S)$, the convolution theorem~\eqref{2.2} still holds
\begin{equation}\label{2.8}
	\s f_{X+Y}=\s f_{X}\cdot \s f_{Y}
\end{equation}
and, also, the inverse carries over and can be readily written as 
\begin{equation}\label{2.9}
	\s^{-1}(\s f_X)(\mathbf x):=\frac{1}{\pi^{n(n-1)/2}\prod_{j=0}^{n}j!}\int_{\mathbb{R}^n}\frac{(\dv\mathbf s)}{(2\pi)^n}\,\Delta(\mathbf s)^2\s f_X(\mathbf s)\phi(i\mathbf x,\mathbf s),
\end{equation}
with $\s f_X\Delta(\mathbf s)^2$ being an $L^1$-function on $\mathbb{R}^n$, otherwise we need a regularization. Note that this inverse maps the spherical transform $\s f_X$ back to the matrix density $f_X$. To obtain the joint eigenvalue density, we need to multiply Eq.~\eqref{2.9} with $\Delta(\mathbf c)^2$ and the constant in~\eqref{2.4}.

\subsection{Randomised Additive Horn Problem and Proof of~\eqref{1.1}}\label{sec:Horn-Herm}

Before coming to the setting of 1.~in Theorem~\ref{prop}, we would like to study the general case of two fixed diagonal real matrices $\mathbf a=\diag( a_1,\ldots, a_n)$ 
and $\mathbf b=\diag( b_1,\ldots, b_n)$, each conjugated by an independent Haar distributed unitary matrix.
The result of the corresponding Horn problem in terms of an $n$-fold integral is given in various works~\cite{Fa15,Zu18,FZ19}. Nevertheless, we would like to outline the structure of the derivation since it is important for our subsequent development.

As a first ingredient, we need a convolution theorem involving fixed matrices. This is given in the following lemma.

\begin{lemma}[Convolution Theorem with Fixed Matrices]\label{l1}
	Let $X$ be a random matrix in $\mathrm{Herm}(n)$ distributed as $f_X$, let $\mathbf x_0$ be a fixed real diagonal matrix, and $U$ be a Haar distributed $\U(n)$ matrix. The spherical transform of the sum $X+U\mathbf x_0U^{\dagger}$ is
	\begin{equation}
	\s f_{X+U\mathbf x_0U^{\dagger}}(\mathbf s)=\s f_X(\mathbf s)\cdot\phi(-i\mathbf x_0,\mathbf s).
	\end{equation}
\end{lemma}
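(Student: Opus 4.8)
The plan is to reduce the statement to the ordinary convolution theorem \eqref{2.8} by treating $U\mathbf{x}_0U^\dagger$, with $U$ Haar-distributed, as a bona fide random matrix in $\mathrm{Herm}(n)$ whose distribution I will call $f_{\mathbf{x}_0}$. Since $X$ and $U\mathbf{x}_0U^\dagger$ are independent by hypothesis, \eqref{2.8} gives immediately $\s f_{X+U\mathbf{x}_0U^\dagger}(\mathbf{s}) = \s f_X(\mathbf{s})\cdot \s f_{\mathbf{x}_0}(\mathbf{s})$, so everything comes down to identifying the spherical transform of the fixed matrix $\mathbf{x}_0$, namely showing $\s f_{\mathbf{x}_0}(\mathbf{s}) = \phi(-i\mathbf{x}_0,\mathbf{s})$.

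First I would invoke the remark following \eqref{2.7}, where the spherical transform is rewritten in expectation form as $\s f_X(\mathbf{s}) = \E_X[\phi(-iX,\mathbf{s})]$, using that the HCIZ integral $\phi$ is a unitarily invariant function of its first argument and hence well-defined on the full matrix. Applying this to $X = U\mathbf{x}_0U^\dagger$ yields
\begin{equation}
\s f_{\mathbf{x}_0}(\mathbf{s}) = \E_U\left[\phi(-iU\mathbf{x}_0U^\dagger,\mathbf{s})\right] = \E_U\left[\phi(-i\mathbf{x}_0,\mathbf{s})\right] = \phi(-i\mathbf{x}_0,\mathbf{s}),
\end{equation}
where the middle equality is exactly the $\U(n)$-invariance $\phi(UXU^\dagger,\mathbf{s}) = \phi(X,\mathbf{s})$ built into the definition \eqref{2.6}, and the last step is trivial since the integrand no longer depends on $U$. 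Alternatively, without passing to the expectation-form definition, I would compute $\s f_{\mathbf{x}_0}$ directly from \eqref{2.5}: the matrix Fourier transform of $U\mathbf{x}_0U^\dagger$ is $\E_U[\exp(-i\tr U\mathbf{x}_0U^\dagger \mathbf{s})]$, which is precisely the HCIZ integral \eqref{2.6} evaluated at $(-i\mathbf{x}_0,\mathbf{s})$ after diagonalising $S = V\mathbf{s}V^\dagger$ and using invariance of Haar measure to absorb $V$.

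I do not expect a serious obstacle here; the content of the lemma is essentially bookkeeping once the expectation-form definition of $\s$ is in hand. The only point requiring a word of care is that $\phi(-i\mathbf{x}_0,\mathbf{s})$ must genuinely be the spherical transform of the (singular, delta-supported on a single orbit) distribution $f_{\mathbf{x}_0}$, which forces us to allow distributions that are not $L^1$ densities in the strict sense; this is already flagged in the text after \eqref{2.7} (``we can even compute the spherical transform of fixed matrices which is simply the spherical function''), so I would simply cite that discussion. One should also note that the product $\s f_X(\mathbf{s})\cdot\phi(-i\mathbf{x}_0,\mathbf{s})$ need not itself be $L^1$ against $\Delta(\mathbf{s})^2$, so the subsequent application of the inverse \eqref{2.9} will require the regularisation alluded to there — but that is a concern for the proof of \eqref{1.1}, not for this lemma.
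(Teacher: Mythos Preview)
Your proposal is correct and essentially matches the paper's proof; in fact your ``alternative'' route---factorising $\E_{X,U}[\exp(-i\tr(X+U\mathbf{x}_0U^\dagger)S)]$ by independence and recognising the $U$-expectation as the HCIZ integral~\eqref{2.6}---is exactly what the paper does in two lines. Your primary route via the convolution theorem~\eqref{2.8} and the expectation form $\s f_X(\mathbf{s})=\E_X[\phi(-iX,\mathbf{s})]$ is an equivalent repackaging, and you correctly flag the minor caveat about singular (orbit-supported) distributions, which the paper's direct Fourier-transform factorisation sidesteps.
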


\begin{proof}
	This is a direct consequence of the Fourier transform~\eqref{2.1} and the independence of $X$ and $U$, i.e.
	\begin{align}
	\E_{X,U}[\exp(-i\tr (X+U\mathbf x_0U^\dagger)S)]&=\E_{X}[\exp(-i\tr XS)]\E_{U}[\exp(-i\tr U\mathbf x_0U^\dagger S)].
	\end{align}
	The second term is the HCIZ integral~\eqref{2.6}.
\end{proof}

For a valid inversion formula $\s^{-1}\s f=f$, we will encounter a point where we need to interchange two integrals, which requires a Fubini's theorem. And to satisfy the condition of Fubini's theorem, it is advisable to introduce an auxiliary random matrix $H_\varepsilon$ ($\varepsilon>0$) as a regularisation with the density and spherical transform (see~\cite[\S 3.1]{KR17})
\begin{equation}\label{2.10}
f_{H_\varepsilon}(x_1,\cdots, x_n)=\prod_{j=1}^n\frac{e^{-x_j^2/4\varepsilon^2}}{2\sqrt{\pi}\varepsilon},\qquad\s f_{H_\varepsilon}(s_1,\cdots, s_n)=\prod_{j=1}^ne^{-\varepsilon^2s_j^2},
\end{equation}
which is the Gaussian unitary ensemble (GUE).
In this way, the density as well as the spherical transform of the sum $C_\varepsilon=U\mathbf aU^\dagger +V\mathbf bV^\dagger+H_\varepsilon$ are always guaranteed to be $L^1$-functions.

When applying Lemma~\ref{l1} twice, the spherical transform of $C_\epsilon$ is
\begin{equation}\label{2.10.b}
\s f_{C_\varepsilon}(s_1,\cdots, s_n)=\left(\prod_{j=1}^ne^{-\varepsilon^2s_j^2}\right)\,\phi(-i\mathbf a,\mathbf s)\phi(-i\mathbf b,\mathbf s).
\end{equation}
Next, we exploit~\eqref{2.9} and then take the limit $\varepsilon\rightarrow0$ to obtain the joint eigenvalue density of $C$,
\begin{equation}\label{2.11}
	p(\mathbf c)=\frac{\Delta(\mathbf c)^2}{\prod_{j=0}^{n}(j!)^2}\lim_{\varepsilon\rightarrow 0}\int_{ \mathbb{R}^n} \frac{(\dv \mathbf s)}{(2\pi)^n}\,\Delta(\mathbf s)^2\left(\prod_{j=1}^ne^{-\varepsilon^2s_j^2}\right)\phi(-i\mathbf a,\mathbf s)\phi(-i\mathbf b,\mathbf s)\phi(i\mathbf c,\mathbf s);
\end{equation}
Here $\prod_{j=1}^ne^{-\varepsilon^2s_j^2}$ is the additional regularisation. See~\cite{Zu18},~\cite[eq. (5)]{Coquereaux2017} and its generalisation in~\cite{CRZ20}.

The particular case where  we have $\mathbf b=\diag(b,0,\cdots,0)$ has rank one is given in~\cite[\S 4.1 in particular Eqn. (4.10)]{FZ19} and \cite[Thm. 4.1]{Fa15}. The HCIZ-integral~\eqref{2.6} simplifies in this case to
\begin{equation}\label{2.12}
\phi(-i\mathbf b,\mathbf s)=\frac{(-i)^{n-1}(n-1)!}{b^{n-1}}\sum_{p=1}^n\frac{e^{-ibs_p}}{\prod_{l\ne p}(s_l-s_p)},
\end{equation}
allowing (\ref{2.11}) to be written 
\begin{equation}\label{2.13}
\begin{split}
p(\mathbf c)=&\frac{(-i)^{n-1}(n-1)!}{(n!)^2b^{n-1}}\frac{\Delta(\mathbf c)}{\Delta(\mathbf a)}\lim_{\varepsilon\rightarrow 0}\int_{\mathbb{R}^n} \frac{(\dv \mathbf s)}{(2\pi)^n}\left(\prod_{j=1}^ne^{-\varepsilon^2s_j^2}\right)\\
&\times\det[e^{ic_js_k}]_{j,k=1}^n\det[e^{-ia_js_k}]_{j,k=1}^n\sum_{p=1}^n\frac{e^{-ibs_p}}{\prod_{l=1,l\ne p}^n(s_l-s_p)}.
\end{split}
\end{equation}
This integral serves as the starting point of the proof of Eq.~\eqref{1.1}. It will be carried out in two steps where we first massage the integrals and finally perform the limit $\varepsilon\to0$.

\subsubsection*{Step 1: Application of Andr\'eief's Identity}

Due to the sum in~\eqref{2.13}, we cannot apply Andr\'eief's identity~\cite{An86} directly. We can try instead the generalized version of it derived in~\cite[Appendix C]{KG10}. For this aim, it is helpful to notice that the poles at $s_l=s_p$ in the sum are all cancelled by the zeros of the two determinants. Furthermore, each summand yields the same contribution due to permutation symmetry of the integrand so that we can select one, say $p=1$. When integrating over $s_2,\ldots,s_n$ first and shifting the contour of $s_1$ by an imaginary increment $+iy$ with $y>0$, we can apply the generalized Andr\'eief identity~\cite[Eqn.~(C.4)]{KG10}  and arrive at
\begin{equation}\label{2.14}
\begin{split}
p(\mathbf c)=&-\frac{1}{n\,b^{n-1}}\frac{\Delta(\mathbf c)}{\Delta(\mathbf a)}\lim_{\varepsilon\rightarrow 0}\int_\mathbb{R}\dv s_1\, e^{-\varepsilon^2(s_1+iy)^2-ib(s_1+iy)}\\
&\times\det\begin{bmatrix}
0&\displaystyle\left[e^{ic_k(s_1+iy)}\right]_{k=1,\cdots,n}
\\\\
\displaystyle\left[e^{-ia_j(s_1+iy)}\right]_{j=1,\cdots,n}&\displaystyle\left[\int_{\mathbb{R}}\frac{e^{-\varepsilon^2s^2+i(c_k-a_j)s}}{s-(s_1+iy)}\frac{\dv s}{2\pi i}\right]_{j,k=1}^n
\end{bmatrix}.
\end{split}
\end{equation}
The shift in $+iy$ is controlled by the Gaussian regularization and guarantees that the integral in the determinant is still absolutely integrable. 
This integral is a product of two $L^2$-functions, so that we can make use of Plancherel's theorem,
\begin{equation}\label{2.23}
\int_{\mathbb{R}}\frac{e^{-\varepsilon^2s^2+i(c_k-a_j)s}}{s-(s_1+iy)}\frac{\dv s}{2\pi i}
=e^{i(c_k-a_j)(s_1+iy)}\int_{(a_j-c_k)/(2\varepsilon)}^\infty \frac{e^{-u^2}}{\sqrt{\pi}}e^{2i\varepsilon u(s_1+iy)}\dv u.
\end{equation}
The integral is essentially the complimentary error function ${\rm erfc}$ which is analytic allowing us to remove the shift $iy\to0$. Pulling out common terms, the density becomes
\begin{equation}\label{2.27}
\begin{split}
p(\mathbf c)=&-\frac{1}{n\,(2b)^{n-1}}\frac{\Delta(\mathbf c)}{\Delta(\mathbf a)}\lim_{\varepsilon\rightarrow 0}\int_\mathbb{R}\dv s_1\, \exp\left[-n\varepsilon^2s_1^2-i\left(b+\sum_{j=1}^n(c_j-a_j)\right)s_1\right]\\
&\times\det\begin{bmatrix}
0&\displaystyle\left[1\right]_{1\times n}
\\\\
\left[1\right]_{n\times 1}&
\displaystyle\left[{\rm erfc}\left[\frac{a_j-c_k}{2\varepsilon}-i\varepsilon s_1\right]\right]_{j,k=1}^n
\end{bmatrix}.
\end{split}
\end{equation}
Here, $[1]_{p\times q}$ denotes a $p\times q$ matrix with all its entries being $1$.

\subsubsection*{Step 2: Limit $\varepsilon\to0$}

Our goal is to replace the above error function by the Heaviside step function in the $\varepsilon\rightarrow 0$ limit, where we need the condition $\min_{j,k}|c_k-a_j|>0$. We can firstly make an estimate
\begin{equation}\label{30}
\begin{split}
\left|{\rm erfc}\left[\frac{a_j-c_k}{2\varepsilon}-i\varepsilon s_1\right]-2\Theta(c_k-a_j)\right|
=&\left|{\rm erfc}\left[\frac{|a_j-c_k|}{2\varepsilon}-i\varepsilon\,{\rm sign}(a_j-c_k) s_1\right]\right|\\
<&\frac{\exp[-(a_j-c_k)^2/4\varepsilon^2+\varepsilon^2 s_1^2]}{|(a_j-c_k)/\varepsilon-i\varepsilon s_1|}
\end{split}
\end{equation}
with $\Theta$ the Heaviside step function. We furthermore denote the difference between the error function ${\rm erfc}\left[\frac{a_j-c_k}{2\varepsilon}-i\varepsilon s_1\right]$ and $2\Theta(c_k-a_j)$ by $e_{jk}$. Then, $e_{jk}$ is of the order or smaller than $\mathcal{O}(\varepsilon e^{-(a_j-c_k)^2/4\varepsilon^2+\varepsilon^2 s_1^2})$. 

Now we replace the error function in ~\eqref{2.27} by
\begin{equation}
	e_{jk}+2\Theta(c_k-a_j)
\end{equation}
and expand the determinant. This will give us terms of products involving at least one $e_{jk}$. Then with a rescaling $s_1=y/\varepsilon$, we look at the integrals that involve the form
\begin{equation}\label{2.28}
\int_\mathbb{R} e^{-(n-l)y^2}\exp\left(\frac{iy}{\varepsilon}(-b+\sum_{j=1}^n(c_j-a_j)) \right)\, \mathcal O(\varepsilon^{l-1}e^{-l\min_{j,k}|a_j-c_k|^2/4\varepsilon^2})\dv y
\end{equation}
for $l=1,\ldots,n-1$, which all vanish  exponentially at least like $e^{-\min_{j,k}|a_j-c_k|^2/4\varepsilon^2}$. The order shows the leading order of these functions in limit. This tells us that all terms involving at least one $e_{jk}$ vanish in the limit and therefore, only the contribution $2\Theta(c_k-a_j)$ in the expansion survives, which leads to the desired determinant
\begin{equation}
	\det\begin{bmatrix}
	0&\displaystyle[1]_{1\times n}
	\\
	\displaystyle[1]_{n\times 1}&\displaystyle\left[\Theta(c_j-a_k)\right]_{j,k=1}^n
	\end{bmatrix}.
\end{equation}

This procedure decouples the corresponding integral in $s_1$ which is a Gaussian. The remaining limit is only given in a weak topology creating the Dirac delta function that reflects the trace condition~\eqref{1.01}, so that we eventually arrive at
\begin{equation}\label{1.1a}
p(\mathbf c)=-\frac{1}{n\,b^{n-1}}\frac{\Delta(\mathbf c)}{\Delta(\mathbf a)}\,\delta\left(b+\sum_{j=1}^{n}(a_j-c_j)\right)\det\begin{bmatrix}
0&\displaystyle[1]_{1\times n}
\\
\displaystyle[1]_{n\times 1}&\displaystyle\left[\Theta(c_j-a_k)\right]_{j,k=1}^n
\end{bmatrix}.
\end{equation}
The determinant with the minus sign is equivalent with the ordering in~\eqref{1.1} which appears $n!$ times in the above permutation invariant version. This concludes the proof.

\section{Multiplicative Horn Problem on $\mathrm{Herm}_+(n)$}\label{S3}

In analogy to the sum of matrices, we can ask what is the corresponding univariate case for products of positive definite Hermitian matrices, in particular what is the corresponding transform that factorises the problem. For products of random variables on the positive real line this is the Mellin transform
\begin{equation}\label{3.01}
\mathbb E_{X_1}\left[X_1^{s-1}\right]:=\int_{\mathbb{R}_+}f_{X_1}(x)\,x^{s}\,\frac{\dv x}{x}, 
\end{equation}
for all $s\in\mathbb{C}$ where the integral exists for the probability density $f_{X_1}$.
Here, $\dv x/x$ is recognized as the Haar measure on the multiplicative group $\mathbb{R}_+$. The corresponding convolution theorem reads
\begin{equation}\label{3.02}
\mathbb E_{X_1,X_2}\left[(X_1X_2)^{s-1}\right]=\mathbb E_{X_1}\left[X_1^{s-1}\right]\cdot \mathbb E_{Y_1}\left[Y_1^{s-1}\right].
\end{equation}
The matrix analogue on ${\rm Herm}_+(n)$ is the spherical transform~\cite{He84,KK16} which we will briefly recall in subsection~\ref{sec:spher} and will be applied to the multiplicative Horn problem in subsection~\ref{sec:Horn-Pos}.

\subsection{Spherical Transform on $\mathrm{Herm}_+(n)$}\label{sec:spher}

To mimic the Mellin transform~\eqref{3.01} on the matrix level we need to say first and foremost what is the generalisation of $x^s$. For an $X\in\mathrm{Herm}_+(n)$ and $\mathbf s=(s_1,\cdots,s_n)\in\C^n$, this function is Selberg's generalised power function~\cite{Se56}
\begin{equation}\label{Selberg}
|X|^{\mathbf s}:=\prod_{j=1}^{n-1}\det(X_{j\times j})^{s_j-s_{j+1}-1}\cdot\det(X)^{s_n}
\end{equation}
where $X_{j\times j}$ is the $j\times j$ upper left block of the matrix $X$. This suggests choosing
\begin{equation}\label{3.1}
\hat f_X(\mathbf s):=\int_{\mathrm{Herm}_+(n)}f_X(X)\,|X|^{\mathbf s}\,\frac{(\dv X)}{(\det X)^n}
\end{equation}
as the multivariate generalisation of~\eqref{3.01} with $(\dv X)/(\det X)^n$ as the Haar measure with respect to the closed multiplication $(X_1,X_2)\mapsto X_1^{1/2}X_2X_1^{1/2}$ on ${\rm Herm}_+(n)$ (observe that $X_1^{1/2}X_2X_1^{1/2}$ and $X_1X_2$ have the same eigenvalues, but the latter is no longer in $Herm_+(n)$). Indeed for $n=1$, \eqref{3.1} reduces to~\eqref{3.01}.

There is, however, a critical problem which forbids an inverse of the transform, namely we mapped a function of $n^2$ variables to one on $n$. Thus we need to restrict its definition to the eigenvalue statistics. When assuming that $f_{X}$ is unitarily invariant again, we are allowed to diagonalise $X=Ue^{\mathbf{x} }U^\dagger$ and integrating over $U$ alone, which leads to the Gelfand-Na\u\i mark integral~\cite{GN57}
\begin{equation}\label{3.3}
	\int_{\U(n)}\mu(\dv U)\,|Ue^{\mathbf{x} }U^\dagger|^{\mathbf s}=\prod_{j=0}^{n-1}j!\,\frac{\det[e^{x_js_k}]_{j,k=1}^n}{\Delta(e^\mathbf x)\Delta(\mathbf s)}:=\phi(e^\mathbf x,\mathbf s).
\end{equation}
This integral is the counterpart of the HCIZ-integral~\eqref{2.6} and they are exactly the spherical functions. The corresponding spherical transform is (see e.g. \cite[\S 2.4]{KK16}, \cite{KK19,KFI17,Te16})
\begin{equation}\label{3.4}
\hat f_{X}(\mathbf s)=\frac{\pi^{n(n-1)/2}}{\prod_{j=0}^{n}j!}\int_{\mathbb{R}^n}\frac{(\dv \mathbf x)}{\exp({(n-1)\sum_{j=1}^n x_j})}\Delta(e^\mathbf x)^2 f_{X}(e^\mathbf x) \phi(e^\mathbf x,\mathbf s):=\s f_{X}(\mathbf s).
\end{equation}
Here the factor $\exp({-(n-1)\sum_{j=1}^n x_j})$ is the combination of the Haar measure 
\linebreak $\det X^{-n}\dv X$ and the Jacobian arising from the change of variables $x_j\mapsto e^{x_j}$. The normalization reads $\hat f_X(\mathbf{s}_0)=1$ with $\mathbf{s}_0=\diag(0,\ldots,n-1)+n\eins_n$, where the the term $n \mathbf 1_n$ yields a factor $\det X^n$, which effectively
cancels with the determinant factor in the Haar measure.

In general the spherical transform is defined by $\hat f_X(\mathbf s)=\E_{X}[\phi(X,\mathbf s)]$ which also holds for functions that are not unitarily invariant, where $\phi$ is then given by the left hand side of~\eqref{3.3} with $e^{\mathbf{x}}$ replaced by $X$. Indeed for random matrices invariant under the adjoint action of the unitary group it agrees with the above definition. In this way, the spherical transform of a fixed matrix $X$ as well as of the product $UXU^\dagger$ with $U$ being a Haar distributed unitary matrix is only the spherical function divide by $(\det X)^n$. 

The multiplicative counterpart of the convolution theorem~\cite{He84} is
\begin{equation}\label{3.5}
\s f_{X_1^{1/2}X_2X_1^{1/2}}=\s f_{X_1}\cdot\s f_{X_2}.
\end{equation}
This, however, only holds true when one of the random matrices $X_1$ or $X_2$ is unitarily invariant. This formula together with the inversion~\cite{He84,KK16}
\begin{equation}\label{3.6}
	\s^{-1}(\s f_X)(e^\mathbf x)=\frac{(-1)^{n(n-1)/2}}{\pi^{n(n-1)/2}\prod_{j=0}^{n}j!}\int_{\mathbb{R}^n} \frac{(\dv \mathbf s)}{(2\pi )^n}\Delta(\mathbf{s}_0+i\mathbf s)^2 \phi(e^{-\mathbf x},\mathbf{s}_0+i\mathbf s)\s f_X(\mathbf{s}_0+i\mathbf s)
\end{equation}
 highlights that the eigenvalue statistics of  $X_1^{1/2}X_2X_1^{1/2}$ and $X_2^{1/2}X_1X_2^{1/2}$ are the same. For the inverse, we  have again suppressed the regularization in~\eqref{3.6} in contrast to~\cite[(2.40)]{KK16}. Thus, we assume that $\Delta(\mathbf{s}_0+i\mathbf s)^2 \s f_X(\mathbf{s}_0+i\mathbf s)$ is an $L^1$-function on $\mathbb{R}^n$.

The definitions and properties above can be also established in the group and representational theoretical language in~\cite[Chapter IV \S 2.1]{He84}. From that perspective it does not come as a surprise that the spherical functions resemble the Schur polynomials, meaning the characters of the irreducible representations of $\U(n)$. Yet, there is a subtle difference. While for the characters $s$ has to be an array of integers, here it is a complex vector, cf. Sec.~\ref{S4}.

\subsection{Randomised Multiplicative Horn Problem on $\mathrm{Herm}_+(n)$ and Proof of~\eqref{1.2}}\label{sec:Horn-Pos}

Analogous to the sum, we need to establish the convolution that involve fixed matrices.

\begin{lemma}[Multiplicative Convolution with Fixed Matrices on ${\rm Herm}_+(n)$]\label{l4}
	Let $X$ be a random matrix in $\mathrm{Herm}_+(n)$, $\mathbf x_0$ be a real diagonal matrix, and $U$ be a Haar distributed unitary matrix. Then, the spherical transform of the random matrix $X^{1/2}Ue^{\mathbf x_0}U^\dagger X^{1/2}$ is
	\begin{equation}
	\s f_{X^{1/2}Ue^{\mathbf x_0}U^\dagger X^{1/2}}(\mathbf s)=\s f_X(\mathbf s)\cdot e^{-n\tr\, \mathbf x_0}\phi(e^{\mathbf x_0},\mathbf s).
	\end{equation}
\end{lemma}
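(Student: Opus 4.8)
The plan is to mimic the proof of Lemma~\ref{l1} from the additive case, replacing the matrix Fourier transform by the spherical transform on $\mathrm{Herm}_+(n)$ and the independence argument by the characterisation $\s f_X(\mathbf s)=\E_X[\phi(X,\mathbf s)]$ together with the multiplicative convolution theorem~\eqref{3.5}. First I would note that, since $U$ is Haar distributed, the random matrix $Ue^{\mathbf x_0}U^\dagger$ is unitarily invariant, so the convolution theorem~\eqref{3.5} applies with $X_1=X$ and $X_2=Ue^{\mathbf x_0}U^\dagger$ (the hypothesis ``one of the factors unitarily invariant'' is met). This immediately gives
\begin{equation}
\s f_{X^{1/2}Ue^{\mathbf x_0}U^\dagger X^{1/2}}(\mathbf s)=\s f_X(\mathbf s)\cdot \s f_{Ue^{\mathbf x_0}U^\dagger}(\mathbf s).
\end{equation}
It then remains to compute $\s f_{Ue^{\mathbf x_0}U^\dagger}(\mathbf s)$, the spherical transform of a fixed matrix conjugated by a Haar unitary.

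For the second step I would use the remark following~\eqref{3.4}: the spherical transform is in general $\s f_Y(\mathbf s)=\E_Y[\phi(Y,\mathbf s)/(\det Y)^n]$, where $\phi(Y,\mathbf s)$ is the Gelfand--Na\u\i mark integral~\eqref{3.3} with $e^{\mathbf x}$ replaced by the full matrix $Y$. Applying this to $Y=Ue^{\mathbf x_0}U^\dagger$, and using that $\det(Ue^{\mathbf x_0}U^\dagger)=e^{\tr\mathbf x_0}$ is deterministic while $\phi(Ue^{\mathbf x_0}U^\dagger,\mathbf s)=\phi(e^{\mathbf x_0},\mathbf s)$ because the Gelfand--Na\u\i mark integrand is itself unitarily invariant (it is defined by an extra averaging over $\U(n)$, so pre-conjugating by $U$ merely re-parametrises the Haar integral), we get $\s f_{Ue^{\mathbf x_0}U^\dagger}(\mathbf s)=e^{-n\tr\mathbf x_0}\phi(e^{\mathbf x_0},\mathbf s)$. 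Substituting this back yields the claimed identity. Alternatively, one can avoid invoking the general definition by computing directly: the density of $Ue^{\mathbf x_0}U^\dagger$ is the delta-measure on the orbit, plugging it into~\eqref{3.1} and decomposing via~\eqref{3.3} gives the same factor $e^{-n\tr\mathbf x_0}\phi(e^{\mathbf x_0},\mathbf s)$, with the $e^{-n\tr\mathbf x_0}$ coming precisely from the $(\det X)^{-n}$ in the Haar measure of~\eqref{3.1}.

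The main obstacle — really the only point requiring care — is justifying that the convolution theorem~\eqref{3.5} is genuinely applicable here, i.e. that it is legitimate to take $X_2$ to be a \emph{fixed} matrix dressed by a Haar unitary rather than a matrix drawn from a unitarily invariant density with a bona fide $L^1$ spherical transform. This is the multiplicative analogue of the subtlety already flagged after~\eqref{3.4} and~\eqref{3.6} (regularisation suppressed, $L^1$-assumptions), and the resolution is the same: one reads $\s f_X(\mathbf s)=\E_X[\phi(X,\mathbf s)]$ as the defining formula, for which the factorisation over independent $X$ and $U$ is immediate from Fubini, exactly as in the proof of Lemma~\ref{l1}; the role of the fixed-matrix case is then simply that its spherical transform equals its spherical function divided by $(\det X)^n$. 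I would therefore phrase the proof through this expectation-value identity and Fubini, relegating the convolution theorem~\eqref{3.5} to a remark, so that integrability of $\s f_X$ is the only standing hypothesis needed.
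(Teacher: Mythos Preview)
Your proposal is correct and lands on essentially the same argument as the paper, though the paper is more compressed: it simply cites the single integral identity
\[
\int_{\U(n)}\mu(\dv U)\,\phi(X_1^{1/2}UX_2U^\dagger X_1^{1/2},\mathbf s)=\phi(X_1,\mathbf s)\,\phi(X_2,\mathbf s)
\]
(from \cite{KK16,He84}) and is done after one expectation in $X$. Your route via the convolution theorem~\eqref{3.5} plus the computation $\s f_{Ue^{\mathbf x_0}U^\dagger}(\mathbf s)=e^{-n\tr\mathbf x_0}\phi(e^{\mathbf x_0},\mathbf s)$ is equivalent, since~\eqref{3.5} is precisely this identity lifted to expectations.

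One small caution on your last paragraph: the phrase ``the factorisation over independent $X$ and $U$ is immediate from Fubini, exactly as in the proof of Lemma~\ref{l1}'' slightly overstates the analogy. In Lemma~\ref{l1} the Fourier factor $e^{-i\tr(X_1+X_2)S}$ splits \emph{pointwise}, so Fubini alone suffices. Here $\phi(X^{1/2}Ue^{\mathbf x_0}U^\dagger X^{1/2},\mathbf s)$ does \emph{not} split pointwise; the factorisation only appears after the Haar average over $U$, i.e.\ you still need the displayed integral identity (equivalently~\eqref{3.5}) as the substantive input. Fubini then lets you pull the $X$-expectation through. So your proof is fine, but the ``key lemma'' is the spherical-function factorisation, not Fubini.
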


\begin{proof}
	The result is a direct consequence of the integral~\cite{KK16,He84}
	\begin{equation}
	\int_{\U(n)}\mu(\dv U)\phi(X_1^{1/2}UX_2U^\dagger X_1^{1/2},\mathbf s)=\phi(X_1,\mathbf s)\phi(X_2,\mathbf s)
	\end{equation}
	which holds for any $X_1,X_2\in{\rm Herm}_+(n)$ and $\mathbf{s}\in\mathbb{C}$.
\end{proof}

As before, we need to introduce an auxiliary unitarily-invariant random matrix $H_\varepsilon$ with eigenvalues $e^\mathbf x$, whose joint eigenvalue density is
\begin{align}\label{3.10}
f_{H\varepsilon}(e^\mathbf x)=\frac{1}{\mathcal N_\varepsilon}\Delta(e^\mathbf x)\det\left[\left(-\frac{\partial}{\partial x_k}\right)^{j-1}\frac{e^{-x_k^2/4\varepsilon^2}}{2\sqrt{\pi}\varepsilon}\right]_{j,k=1}^n\propto\Delta(\mathbf x)\Delta(e^{\mathbf x})e^{-\tr\,\mathbf{x}^2/4\varepsilon^2}.
\end{align}
This relation (in particular, the Vandermonde determinant $\Delta(\mathbf x)$) follows by use of the Rodrigues formula for the Hermite polynomials
to compute the derivatives, removal of the remaining Gaussian from each column, then noting the remaining determinant of Hermite polynomials is proportional to $\Delta(\mathbf x)$ as is standard.

This ensemble is one of the Muttalib-Borodin ensembles~\cite{Muttalib1995,Borodin1998,FW15,KK19}, which is a special case of a P\'olya distribution with a log-normal weight~\cite[Example 2.4 (a)]{FKK}. It is the solution of the DMPK equation~\cite{Dorokhov1982,Mello1988,Frahm1995,Beenakker1997} and represents the heat kernel of the multiplicative Dyson-Brownian motion on the positive Hermitian matrices 
$\mathrm{Herm}_+(n)=\mathrm{GL}(n,\C)/\U(n)$ ~\cite{Ipsen2016}. Hence, it is well-known that it describes a probability density with the normalisation constant
		\begin{equation}
		\mathcal N_\varepsilon=n!\prod_{j=0}^{n-1}j!e^{(j+1)^2\varepsilon^2},
		\end{equation}
		and that its spherical transform reads~\cite[Example 3.6 (e)]{KK19}
		\begin{equation}\label{3.15x}
		\s f_{H\varepsilon} (\mathbf s)=\prod_{j=1}^ne^{\varepsilon^2(s_j-n+1)^2-\varepsilon^2j^2}.
		\end{equation}

We will combine the regularization $H_\varepsilon$ with the fixed diagonal matrices $e^\mathbf a=(e^{a_1},\cdots,e^{a_n})$ and $e^\mathbf b=(e^{b_1},\cdots,e^{b_n})$ via the Haar distributed unitary matrices $U,V \in \U(n)$ to
form the matrix
 $$C_\varepsilon=H_\varepsilon^{1/2}(UAU^\dagger)^{1/2}VBV^\dagger(UAU^\dagger)^{1/2}H_\varepsilon^{1/2}.$$
The spherical transform of $C_\varepsilon$ is then
\begin{equation}\label{3.14}
	\s f_{C_\varepsilon}(\mathbf s)=\s f_{H_\varepsilon}(\mathbf s)\cdot \phi(e^\mathbf a,\mathbf s)\cdot \phi(e^\mathbf b,\mathbf s),
\end{equation}
and,  with the help of~\eqref{3.6}, its joint eigenvalue density becomes
\begin{equation}\label{3.112}
\begin{split}
	p(e^\mathbf c)=&\frac{1}{\prod_{j=0}^{n}(j!)^2}\,\frac{\Delta( e^\mathbf c)^2}{e^{n\sum_{j=1}^n(a_j+b_j)}}\lim_{\varepsilon\to 0}\int_{\mathbb{R}^n} \frac{(\dv \mathbf s)}{(2\pi )^n}\Delta(\mathbf{s}_0+i\mathbf s)^2\\
	&\times \left(\prod_{j=1}^ne^{\varepsilon^2(j+i s_j)^2-\varepsilon^2j^2}\right)\,\phi(e^\mathbf a,\mathbf{s}_0+i\mathbf s)\,\phi(e^\mathbf b,\mathbf{s}_0+i\mathbf s)\,\phi(e^{-\mathbf c},\mathbf{s}_0+i\mathbf s).
	\end{split}
\end{equation}
This is the analogue of~\eqref{2.11} and shows the main problem to overcome. In the general setting, we have too many Vandermonde determinants of $\mathbf{s}$ in the denominator to deal with them analytically, as in the case for the additive Horn problem. Therefore, we will specialize to the case that $e^\mathbf b$ is rank-1.

Thus choosing $\mathbf b=(b,0,\cdots,0)$, we need to know the corresponding spherical function $\phi(e^\mathbf b,\mathbf{s})$. The computation works analogous to the additive case by repetitively applying l'H\^opital's rule to~\eqref{3.3}, which yields
	\begin{equation}\label{3.23}
	\begin{split}
	\phi(e^\mathbf b,\mathbf s)=&\frac{(n-1)!}{(1-e^b)^{n-1}}\det
	\begin{bmatrix}
	\left[e^{bs_k}\right]_{k=1,\cdots,n}\\
	\left[s_{k}^{j-1}\right]_{\substack{j=1,\cdots, n-1\\k=1,\cdots,n}}
	\end{bmatrix}=\frac{(n-1)!}{(1-e^b)^{n-1}}\sum_{p=1}^n\frac{e^{bs_p}}{\prod_{\substack{l=1\\l\ne p}}^n(s_l-s_p)}.
	\end{split}
	\end{equation}
This is essentially the same computation as for~\eqref{2.12}.

Therefore, the density~\eqref{3.112} explicitly becomes
\begin{equation}\label{3.18}
\begin{split}
	p(e^{\mathbf c})=&\frac{(n-1)!}{(n!)^2(1-e^{b})^{n-1}}\frac{\Delta({e^\mathbf c})}{\Delta({e^\mathbf a})}\lim_{\varepsilon\rightarrow 0}\int_{\mathbb{R}^n}\frac{(\dv \mathbf s)}{(2\pi)^n}\sum_{p=1}^n\frac{e^{b(p-1+is_p)}}{\prod_{\substack{l=1\\l\ne p}}^n[l-p+i(s_l-s_p)]}\\
	&\times \left(\prod_{j=1}^ne^{\varepsilon^2(j+i s_j)^2-\varepsilon^2j^2}\right)\det[e^{-c_j(k-1+is_k)}]_{j,k=1}^n\det[e^{a_j(k-1+is_k)}]_{j,k=1}^n.
\end{split}
\end{equation}
Due to the Gaussian regularization we do not need the shift by $\mathbf{s}_0$ originally given in the inversion formula~\eqref{3.6}. Changing variables $s_j=-y_j+i j$ ($j=1,\cdots,n$) shows
\begin{equation}\label{3.18.b}
\begin{split}
	p(e^{\mathbf c})=&\frac{(-i)^{n-1}(n-1)!e^{\sum_{j=1}^n(a_j-c_j)-b}}{(n!)^2(e^{b}-1)^{n-1}}\frac{\Delta({e^\mathbf c})}{\Delta({e^\mathbf a})}\lim_{\varepsilon\rightarrow 0}\int_{\mathbb{R}^n}\frac{(\dv \mathbf y)}{(2\pi)^n}\sum_{p=1}^n\frac{e^{-iby_p}}{\prod_{\substack{l=1\\l\ne p}}^n(y_l-y_p)}\\
	&\times \left(\prod_{j=1}^ne^{-\varepsilon^2y_j^2}\right)\det[e^{ic_jy_k}]_{j,k=1}^n\det[e^{-ia_jy_k}]_{j,k=1}^n.
\end{split}
\end{equation}
This $\mathbf y$-integral is exactly the same as in~\eqref{2.13}, telling us that
\begin{equation}\label{1.2a}
p(e^{\mathbf c})=-\frac{1}{n(e^b-1)^{n-1}}\frac{\Delta(e^\mathbf c)}{\Delta(e^\mathbf a)}\, \delta\left(b+\sum_{j=1}^n(a_j-c_j)\right)\det\begin{bmatrix}
0&\displaystyle[1]_{1\times n}
\\
\displaystyle[1]_{n\times 1}&\displaystyle\left[\Theta(c_j-a_k)\right]_{j,k=1}^n\end{bmatrix},
\end{equation}
or its equivalent form~\eqref{1.2}, which concludes the proof.

\begin{remark}
In~\cite{FZ19} an algebraic method has been used to derive~\eqref{1.1} and~\eqref{1.3}, which can be carried over to this case. The matrix $C$ has the same eigenvalue as $A^{1/2}WBW^\dagger A^{1/2}$ where $W=U^\dagger V\in\U(n)$ drawn from the Haar measure. In the rank-$1$ case, the characteristic polynomial of $C$ is
\begin{align}
	\det(\lambda I-A^{1/2}WBW^\dagger A^{1/2})&=\det(\lambda I-AWBW^\dagger)\nonumber\\
	&=\det(\lambda I-A)(b-(b-1) \lambda\mathbf w^\dagger(\lambda-A)^{-1}\mathbf w\mathbf ),
\end{align}
where $\mathbf w$ denotes the first column of $W$. Then the joint eigenvalue density~\eqref{1.2} can be deduced by the same discussion as in~\cite[Prop. 2]{FZ19}, except that $C$ now has eigenvalues parametrised by $e^\mathbf c$.
\end{remark}

\begin{remark}\label{rmk11}
	In other works (e.g.~\cite{BCDL,Kl00}) a different version of multiplicative Horn problem relating to the singular values of products of matrices is considered. The corresponding randomised version is to find the singular value probability density of the matrix $UXU^\dagger\cdot VYV^\dagger$ while $X, Y$ are two $n\times n$ fixed real diagonal matrices. This can be formulated into our setting, as the square of the desired singular values are the eigenvalues of the matrix $UXU^\dagger\cdot VY^2V^\dagger\cdot UXU^\dagger$. It is then clear that by letting $A=X^2$ and $B=Y^2$, Eq.~\eqref{1.2} is the joint probability density of the squared singular values in the case where $Y$ has rank one.
\end{remark}

\section{Multiplicative Horn problem on $\mathrm{U}(n)$}\label{S4}

When it comes to harmonic analysis on $\U(n)$, the first thing to do is to seek a univariate analogue which is the multiplication of complex, unimodular phases, i.e., the group $\U(1)$. For a random variable $X$ in $\U(1)$ with probability density $f_X(z)\in L^1(S^1)$ the corresponding Mellin transform is equivalent to the Fourier transform on the interval $(-\pi,\pi]$,
\begin{equation}\label{4.0}
\E_X[X^{s-1}]=\int_{\U(1)}f_X(z)z^{s}\mu(\dv z)=\int_{(-\pi,\pi]}f_X(e^{i\theta})e^{is\theta }\frac{\dv \theta}{2\pi}
\end{equation}
with $s\in\Z$ and $\mu(\dv z)$ the normalised Haar measure on $\U(1)$. The subtlety of choosing integers for $s$, instead of the complex numbers as used for the other two cases, is crucial. First and foremost, taking a root of a complex number is not unique and one has to select a cut in the complex plane, destroying its continuity as a function. Secondly, the expectation value $\E_X[X^{s-1}]$ is up to a constant equal to the $s$--th Fourier component of $f_X$,  and so knowledge of $\{\E_X[X^{s-1}] \}_{s\in\Z}$ suffices to reconstruct $f_X$ via a Fourier series. Thus, the spherical function in this case is the complex phase $z=e^{i\theta}$.

Since the Mellin transform on $\U(1)$ has the same form~\eqref{3.01} as defined on $\mathbb{R}_+$, it permits the same multiplicative convolution theorem~\eqref{3.02}, or equivalently an additive convolution theorem in terms of the variable $\theta$. We will extend this idea to the general case $\U(n)$ in subsection~\ref{sec:unitary} and apply it to the Horn problem in subsection~\ref{sec:Horn.unitary}.

\subsection{Spherical transform on $\mathrm{U}(n)$}\label{sec:unitary}

The spherical functions $z^s$ with $s\in\Z$ are also known as the characters of the irreducible representations of $\U(1)$. The most natural generalisation to $\U(n)$ is therefore the corresponding normalized characters
\begin{equation}\label{4.4.1}
\phi(X,\mathbf{s})=\frac{{\rm ch}_{\mathbf{s}}(X)}{{\rm ch}_{\mathbf{s}}(\eins_n)},
\end{equation}
where  $X\in\U(n)$ and $\mathbf{s}\in\Z^n$ is a partition that corresponds to an irreducible representation  of $\U(n)$. Indeed these normalised characters satisfy all the necessary conditions needed for 
an harmonic analysis on $\mathrm{U}(n)$.

In particular, the choice (\ref{4.4.1}) admits the integral
\begin{equation}\label{character-fact}
\int_{\U(n)}\mu(\dv U) \phi(X_1UX_2U^\dagger,\mathbf{s})=\phi(X_1,\mathbf{s})\phi(X_2,\mathbf{s})
\end{equation}
for all $X_1,X_2\in\U(n)$, which has its origin in Schur's lemma.
This, in combination with the spherical transform
\begin{equation}\label{4.5}
\s f_X(\mathbf s)=\E[\phi(X,\mathbf s)]=\int_{\U(n)} \mu(\dv X) f_X(X)\phi(X,\mathbf s)
\end{equation}
of a random matrix $X\in\U(N)$, yields the desired convolution theorem
\begin{equation}\label{4.6}
\begin{split}
\s f_{X_1X_2}=\E_{X_1,X_2}[\phi(X_1X_2,\mathbf s)]=&\int_{\U(n)}\mu(\dv U)\E_{X_1,X_2}[\phi(UX_1U^\dagger X_2,\mathbf s)]\\
=&\E_{X_1,X_2}[\phi(X_1,\mathbf{s})\phi(X_2,\mathbf{s})]=\s f_{X_1}\cdot\s f_{X_2}
\end{split}
\end{equation}
where $X_1,X_2\in\U(n)$ are independent and one of them is invariant under the adjoint action of $\U(n)$.  In the second equality of~\eqref{4.6}, we assumed that $X_1$ satisfies the unitary invariance $f_{X_1}(X_1)=f_{X_1}(UX_1U^\dagger)$.

Another property of the characters is that they are unitarily invariant in their first entry, $\phi(X,\mathbf{s})=\phi(UXU^\dagger,\mathbf{s})$ for any $X,U\in\U(N)$. This 
in turn leads to the explicit Schur polynomial form \cite{Macdonald},
\begin{equation}\label{4.4}
\phi(X,\mathbf{s})=\phi(e^{i\bm \theta},\mathbf s)=\prod_{j=0}^{n-1}j!\,\frac{\det[e^{i\theta_js_k}]_{j,k=1}^n}{\Delta(e^{i\bm \theta})\Delta(\mathbf s)},
\end{equation}
where $e^{i\bm \theta}$ are the eigenvalues of $X\in\U(n)$. With the aid of this formula and the decomposition of the Haar measure~\cite[\S 4.2]{DF17}
\begin{equation}\label{4.2}
(\dv X)=\frac{1}{(2\pi)^nn!}|\Delta(e^{i\bm\theta})|^2(\dv \bm \theta)\mu(\dv U)
\end{equation}
for the eigenvalue decomposition $X=Ue^{i\bm\theta}U^\dagger$, the spherical transform of a unitarily invariant random matrix $X\in\U(n)$ reads
\begin{equation}\label{4.5.a}
\s f_X(\mathbf s)=\frac{\prod_{j=0}^{n-1}j!}{n!}\int_{(-\pi,\pi]^n} \frac{(\dv \bm \theta)}{(2\pi)^n}|\Delta(e^{i\bm\theta})|^2 f_X(e^{i\bm\theta})\frac{\det[e^{i\theta_js_k}]_{j,k=1}^n}{\Delta(e^{i\bm \theta})\Delta(\mathbf s)}.
\end{equation}

In this form three things can be read off. First, the normalisation reads $\s f_X(\mathbf s_1)=1$ with $\mathbf s_1=\diag(0,\ldots,n-1)$. Second, comparison of~\eqref{3.4} and~\eqref{3.1} tells us that there is another representation of the spherical transform as
	\begin{equation}\label{4.1}
	\s f_X(\mathbf s):=\int_{\U(n)}f_X(X)|X|^{\mathbf s}\,\mu(\dv X)
	\end{equation}
with $|X|^{\mathbf s}$ introduced in~\eqref{Selberg}. This is, however, only true when $s_j-s_{j+1}\geq 1$, because the integral may run through poles where the determinant of a subblock of $X$ vanishes. Thus, Eq.~\eqref{4.1} has to be employed carefully.
The third consequence of~\eqref{4.5.a} is an explicit inversion formula. Indeed such an inversion of the spherical transform can be given implicitly as a corollary of the theory of inverse spherical transforms, see~\cite{He84}. The following proposition is essentially given by the Peter-Weyl theorem and the Weyl character formula when considering spherical transforms for the Gelfand pair $(\U(n)\times \U(n), \diag(\U(n)\times \U(n)))$. Here we are only providing an explicit form for this particular case as we need it later.

\begin{proposition}\label{prop7}
	Let $f_X\in L^1(\U(n))$ and $\Delta(\mathbf s)^2\s f_X\in l^1$. Then, the inverse spherical transform is given by
	\begin{equation}\label{4.7}
	\s^{-1}(\s f_X)(e^{i\bm \theta})=\frac{1}{n!\prod_{j=0}^{n-1}(j!)^2}\sum_{\mathbf s\in \mathbb{Z}^n} \Delta(\mathbf s)^2 \s f_X(\mathbf s) \cdot\phi(e^{-i\bm \theta},\mathbf s),
	\end{equation}
	which evaluates to $f_X(e^{i\bm \theta})$ for almost all $\bm \theta\in(-\pi,\pi]^n$.
\end{proposition}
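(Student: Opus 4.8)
\emph{Approach.} Proposition~\ref{prop7} is the Peter--Weyl/Fourier inversion for class functions on $\U(n)$, written out in Weyl-character coordinates. The plan is to substitute the explicit eigenvalue form~\eqref{4.5.a} of $\s f_X$ and the Schur form~\eqref{4.4} of $\phi(e^{-i\bm\theta},\mathbf s)$ into the right-hand side of~\eqref{4.7}, contract the $\mathbf s$-summation down to a kernel built only out of two determinants, evaluate that kernel as a sum of periodic Dirac combs, and integrate. Throughout we are in the setting of~\eqref{4.5.a}, so $f_X$ is unitarily invariant, hence symmetric in the eigenangles.

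\emph{Unfolding the definitions.} Inserting~\eqref{4.5.a} with an auxiliary eigenangle vector $\bm\varphi\in(-\pi,\pi]^n$ for its integration variable, together with~\eqref{4.4}, into~\eqref{4.7}, the two factors $\Delta(\mathbf s)^2$ cancel the denominators $\Delta(\mathbf s)$ coming from $\s f_X(\mathbf s)$ and from $\phi(e^{-i\bm\theta},\mathbf s)$; the surviving $\prod_{j=0}^{n-1}j!$ combine with the prefactor of~\eqref{4.7} to leave $1/(n!)^2$; and $|\Delta(e^{i\bm\varphi})|^2/\Delta(e^{i\bm\varphi})=\Delta(e^{-i\bm\varphi})$. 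Writing $\mathcal K(\bm\varphi,\bm\theta):=\sum_{\mathbf s\in\Z^n}\det[e^{i\varphi_j s_k}]_{j,k=1}^n\det[e^{-i\theta_j s_k}]_{j,k=1}^n$ and interchanging $\sum_{\mathbf s}$ with the $\bm\varphi$-integral (justified below), the whole claim reduces to
\[
\frac{1}{(n!)^2\,\Delta(e^{-i\bm\theta})}\int_{(-\pi,\pi]^n}\frac{(\dv\bm\varphi)}{(2\pi)^n}\,\Delta(e^{-i\bm\varphi})\,f_X(e^{i\bm\varphi})\,\mathcal K(\bm\varphi,\bm\theta)=f_X(e^{i\bm\theta}).
\]
Here the $\mathbf s$ with a repeated entry contribute nothing, since $\Delta(\mathbf s)\,\s f_X(\mathbf s)$ and $\Delta(\mathbf s)\,\phi(e^{-i\bm\theta},\mathbf s)$ are genuine antisymmetric Fourier coefficients, so no $0/0$ arises.

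\emph{Evaluating the kernel and integrating.} Expanding the two determinants over $S_n$ gives $\mathcal K=\sum_{\sigma,\tau}\sgn(\sigma)\sgn(\tau)\prod_k\sum_{s_k\in\Z}e^{is_k(\varphi_{\sigma^{-1}(k)}-\theta_{\tau^{-1}(k)})}$, and each inner geometric series is the Fourier series of the $2\pi$-periodic Dirac comb, $\sum_{s\in\Z}e^{isx}=2\pi\,\delta_{2\pi}(x)$. Reindexing the product by $k\mapsto\sigma(k)$, setting $\rho=\tau^{-1}\sigma$, using $\sgn(\sigma)\sgn(\tau)=\sgn(\rho)$ and the fact that there are $n!$ pairs $(\sigma,\tau)$ for each $\rho$, one obtains $\mathcal K(\bm\varphi,\bm\theta)=(2\pi)^n n!\sum_{\rho\in S_n}\sgn(\rho)\prod_{j=1}^n\delta_{2\pi}(\varphi_j-\theta_{\rho(j)})$. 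Feeding this back, for $\bm\theta$ with pairwise distinct entries (all but a null set) the $\rho$-th term forces $\bm\varphi=(\theta_{\rho(1)},\dots,\theta_{\rho(n)})$; since $\Delta(e^{-i\,\cdot})$ is alternating and $f_X$ is symmetric in the eigenangles, $\Delta(e^{-i\bm\varphi})f_X(e^{i\bm\varphi})$ there equals $\sgn(\rho)\,\Delta(e^{-i\bm\theta})f_X(e^{i\bm\theta})$. Hence each of the $n!$ terms equals $\Delta(e^{-i\bm\theta})f_X(e^{i\bm\theta})$, the factor $(2\pi)^n n!$ cancels the $(2\pi)^n$ and one $n!$, $\Delta(e^{-i\bm\theta})$ cancels, and $f_X(e^{i\bm\theta})$ remains.

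\emph{Main obstacle.} The one non-routine point is legitimising the interchange of the infinite $\mathbf s$-sum with the $\bm\varphi$-integral, equivalently giving the Dirac-comb step a meaning inside an $L^1$ integral; this is exactly where the two hypotheses enter. The assumption $f_X\in L^1(\U(n))$ makes $\Delta(\mathbf s)\,\s f_X(\mathbf s)$ a bounded Fourier coefficient, while $\Delta(\mathbf s)^2\,\s f_X\in l^1$ together with $|\phi(\,\cdot\,,\mathbf s)|\le 1$ forces the series in~\eqref{4.7} to converge absolutely and uniformly, so $g:=\s^{-1}(\s f_X)$ is a continuous class function. The cleanest rigorous route then avoids distributions altogether: integrate that uniformly convergent series term by term against $\phi(\,\cdot\,,\mathbf s')$ and use Schur orthogonality $\int_{\U(n)}\overline{\phi(X,\mathbf s)}\,\phi(X,\mathbf s')\,\mu(\dv X)=1/\dim_{\mathbf s}^2$ with $\dim_{\mathbf s}=\Delta(\mathbf s)/\prod_{j=0}^{n-1}j!$, nonzero whenever $\mathbf s$ is a permutation of $\mathbf s'$ (which reproduces the compensating factor $n!$ from the sum over all of $\Z^n$), to conclude $\s g=\s f_X$; since the characters span the symmetric trigonometric polynomials, which are uniformly dense in the continuous class functions by Stone--Weierstrass, a class function in $L^1(\U(n))$ is determined almost everywhere by its spherical transform, whence $g=f_X$ a.e.
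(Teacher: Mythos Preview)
Your proof is correct, but the route differs from the paper's. The paper proceeds by inserting a Gaussian regulator $e^{-\varepsilon^2\tr(\mathbf s-(n-1)\eins_n/2)^2}$ into the sum~\eqref{4.7}, which makes everything absolutely summable and justifies interchanging $\sum_{\mathbf s}$ with $\int(\dv\bm\varphi)$; it then applies Andr\'eief's identity to collapse the $\mathbf s$-sum into $n!\det[g_\varepsilon(\varphi_j-\theta_k)]$, where $g_\varepsilon$ is the Jacobi-theta heat kernel~\eqref{heat-unitary}, symmetrises the determinant to a product of diagonal entries, and finally sends $\varepsilon\to0$ so that the heat kernel collapses to the periodic delta. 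You instead carry out the same formal Dirac-comb computation without a regulator, and then rescue rigour by a soft argument: the $l^1$ hypothesis plus $|\phi|\le1$ makes the series~\eqref{4.7} uniformly convergent, so $g:=\s^{-1}(\s f_X)$ is continuous, and term-by-term integration against $\phi(\,\cdot\,,\mathbf s')$ together with Schur orthogonality gives $\s g=\s f_X$, whence $g=f_X$ a.e.\ by uniqueness of Fourier coefficients on a compact group.

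Your approach is shorter and more conceptual---it identifies the statement as Peter--Weyl inversion for class functions and exploits the $l^1$ assumption directly rather than through a limiting procedure. The paper's regularised computation is more hands-on and, importantly, sets up exactly the machinery (Gaussian regulator, Andr\'eief, heat-kernel asymptotics) reused in the rank-$1$ Horn computation of Section~\ref{sec:Horn.unitary}, so it serves double duty there. Two minor points in your write-up: your Schur-orthogonality bookkeeping is right (the $n!$ permutations of $\mathbf s'$ in the unrestricted sum over $\Z^n$ cancel the $1/n!$, and $\dim_{\mathbf s}^2=\Delta(\mathbf s)^2/\prod_j(j!)^2$ cancels the remaining prefactor); and in the final uniqueness step you should note that $g$, being a uniform limit of class functions, is itself a class function, so the comparison with the class function $f_X$ takes place inside the conjugation-invariant sector where the characters are complete.
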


Let us point out that one can drop the condition $\Delta(\mathbf s)^2\s f_X\in l^1$ but one must pay the price of a regularising function in~\eqref{4.7} as is known for the Fourier transform or spherical transform on ${\rm Herm}_+(n)$; compared to~\cite[Lemma 2.10]{KK16} where a different regularisation with compact support is instead used.

\begin{proof}
It suffices to compute $\s^{-1}(\s f_X)(e^{i\bm\theta})$, explicitly. To apply Fubini's theorem, we introduce a Gaussian regularisation $e^{-\varepsilon^2\tr\,(\mathbf{s}-(n-1)\eins_n/2)^2}$ in the limit $\varepsilon\to 0$, so rendering everything absolutely convergent. Substituting~\eqref{4.4} and~\eqref{4.5.a} into~\eqref{4.7}, we can swap the sum with the integral which gives
	\begin{equation}\label{4.9}
	\begin{split}
	\s^{-1}(\s f_X)(e^{i\bm \theta})=&\frac{1}{(n!)^2}\lim_{\varepsilon\rightarrow 0}\int_{(-\pi,\pi]^n}\frac{(\dv\bm\varphi)}{(2\pi)^n} \frac{|\Delta(e^{i\bm \varphi})|^2}{\Delta(e^{i\bm \varphi})\Delta(e^{-i\bm \theta})} f_X(e^{i\bm\varphi})\\
	&\hspace*{-1.7cm}\times
	\left( \sum_{\mathbf s\in \mathbb{Z}^n} \left(\prod_{j=1}^ne^{-\varepsilon^2(s_j-(n-1)/2)^2}\right)\det[e^{i\varphi_js_k}]_{j,k=1}^n\det[e^{-i\theta_js_k}]_{j,k=1}^n\right).
	\end{split}
	\end{equation}
	 To evaluate the sum, we push the factor $e^{-\varepsilon^2(s_j-(n-1)/2)^2}$ in one of the determinants and, then, apply Andr\'eief's formula~\cite{An86} to obtain
	\begin{equation}\label{4.10}
	\begin{split}
	&\sum_{\mathbf s\in\Z^n}\left(\prod_{j=1}^ne^{-\varepsilon^2(s_j-(n-1)/2)^2}\right)\det[e^{i\varphi_js_k}]_{j,k=1}^n\det[e^{-i\theta_js_k}]_{j,k=1}^n
	\\
	=&n!\det\left[\sum_{s\in \mathbb{Z}} e^{-\varepsilon^2(s-(n-1)/2)^2}{e^{-i(\varphi_j-\theta_k)s}} \right]_{j,k=1}^n\\
	=&(2\pi)^nn!\det\left[g_\varepsilon(\varphi_j-\theta_k)\right]_{j,k=1}^n e^{-i(n-1)\tr(\bm\varphi-\bm\theta)/2},
	\end{split}
	\end{equation}
	where
\begin{equation}\label{heat-unitary}
g_{\varepsilon}(x):=\frac{1}{2\pi}\sum_{s\in \mathbb{Z}} e^{-\varepsilon^2(s-\frac{n-1}2)^2}{e^{-ix(s-\frac{n-1}2)}}=\sum_{m\in\Z}\frac{(-1)^{(n-1)m}}{2\sqrt{\pi}\varepsilon}e^{-\frac{(x+2m\pi)^2}{4\varepsilon^2}}.
\end{equation}
	The second equality in~\eqref{heat-unitary} follows from the Poisson summation formula. The function $g_\varepsilon(x)$ is a Jacobi theta function and the solution to a heat kernel on the finite interval $(-\pi,\pi]$, as shown in~\cite[Proposition 1.1]{LW2016}.
	
	The determinant can be replaced by the product of the diagonal entries only, due to the symmetry of the other determinants and $f_X$ under the permutations of $\bm\varphi$. This yields an additional factor $n!$, and we end up with
	\begin{equation}\label{4.11}
	\s^{-1}(\s f_X)(e^{i\bm \theta})=\lim_{\varepsilon\rightarrow 0}\int_{(-\pi,\pi]^n}\hspace*{-0.3cm}(\dv\bm\varphi)\, \prod_{1\leq j<l\leq n}\frac{\sin[(\varphi_l-\varphi_j)/2]}{\sin[(\theta_l-\theta_j)/2]} f_X(e^{i\bm\varphi})\prod_{j=1}^ng_\varepsilon(\varphi_j-\theta_k).
	\end{equation}
	Here, we have combined $e^{-i(n-1)\tr\bm\varphi/2}\Delta(e^{\bm\varphi})=\prod_{1\leq j<l\leq n}2i \sin[(\varphi_l-\varphi_j)/2$ and similar for $\bm \theta$.
	The last term is a product of the heat kernel on $(-\pi,\pi]$ that weakly asymptotes to $\prod_{j=1}^n\delta_{n}(\varphi_j-\vartheta_j)$ with the Dirac delta functions
	\begin{equation}\label{Dirac}
	\delta_n(\varphi)=\sum_{j=-\infty}^\infty[\delta(\varphi-4\pi j)-(-1)^n\delta(\varphi-4\pi j +2\pi)].
	\end{equation}
	The minus sign is important since it guarantees that the whole integrand is still $2\pi$ periodic which does not seem valid at first sight. Indeed for even $n$ the product $\prod_{1\leq j<l\leq n}2i \sin[(\varphi_l-\varphi_j)/2$ is only $4\pi$ periodic in each single angle which is compensated by the product of $g_\varepsilon(\varphi_j-\theta_k)$.
	 We conclude that $\s^{-1}(\s f_X)(e^{i\bm \theta})$is equal to $f_X(e^{i\bm \theta})$ almost everywhere.
\end{proof}

\subsection{Randomised Multiplicative Horn problem on $\U(n)$ and specialization to rank-$1$}\label{sec:Horn.unitary}

Analogous to Lemmas~\ref{l1} and~\ref{l4}, we have the following convolution theorem with fixed matrices.

\begin{lemma}[Multiplicative Convolution with Fixed Matrices on $\U(n)$]\label{lm14}
	Let $X$ be a random matrix in $\U(n)$, $\mathbf x_0$ be a diagonal matrix with complex phases, and $U$ be a Haar distributed $\U(n)$ matrix. Then, the spherical function of the product matrix $XU\mathbf x_0U^\dagger$ is
	\begin{equation}
	\s f_{XU\mathbf x_0U^\dagger}(\mathbf s)=\s f_X(\mathbf s)\cdot\phi(\mathbf x_0,\mathbf s).
	\end{equation}
\end{lemma}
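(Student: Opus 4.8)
The plan is to mirror exactly the proofs of Lemma~\ref{l1} and Lemma~\ref{l4}, exploiting the definition $\s f_X(\mathbf s)=\E_X[\phi(X,\mathbf s)]$ together with the character factorisation identity~\eqref{character-fact}. First I would write out the spherical transform of the product matrix directly from its definition~\eqref{4.5}:
\begin{equation*}
\s f_{XU\mathbf x_0 U^\dagger}(\mathbf s)=\E_{X,U}\bigl[\phi(XU\mathbf x_0 U^\dagger,\mathbf s)\bigr],
\end{equation*}
where the expectation is over the random $X\in\U(n)$ (distributed as $f_X$) and the independent Haar-distributed $U$. Since $X$ and $U$ are independent, I would apply Fubini's theorem and perform the $U$-integration first for each fixed value of $X$.

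The key step is then the application of~\eqref{character-fact} with $X_1=X$ and $X_2=\mathbf x_0$: for each fixed $X$,
\begin{equation*}
\int_{\U(n)}\mu(\dv U)\,\phi(XU\mathbf x_0 U^\dagger,\mathbf s)=\phi(X,\mathbf s)\,\phi(\mathbf x_0,\mathbf s),
\end{equation*}
which holds for all $X,\mathbf x_0\in\U(n)$ and $\mathbf s\in\Z^n$ by Schur's lemma as already recorded in the excerpt. Substituting this back and pulling the $X$-independent factor $\phi(\mathbf x_0,\mathbf s)$ out of the remaining expectation gives
\begin{equation*}
\s f_{XU\mathbf x_0 U^\dagger}(\mathbf s)=\phi(\mathbf x_0,\mathbf s)\,\E_X[\phi(X,\mathbf s)]=\s f_X(\mathbf s)\cdot\phi(\mathbf x_0,\mathbf s),
\end{equation*}
which is the claimed identity. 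Note that, unlike Lemma~\ref{l4}, no unitary-invariance hypothesis on $X$ is needed here and no Jacobian factor such as $e^{-n\tr\mathbf x_0}$ appears, because on $\U(n)$ the relevant Haar measure already carries the group structure and the characters are genuinely bi-invariant; I would flag this simplification explicitly.

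I do not expect a serious obstacle: the only points requiring a word of care are the interchange of expectation and $U$-integration (justified because $\phi(\cdot,\mathbf s)$ is a bounded continuous function on the compact group $\U(n)$, so everything is finite and Fubini applies unconditionally) and the observation that $\mathbf x_0$ having complex phases means $\mathbf x_0\in\U(n)$, so~\eqref{character-fact} is directly applicable with no analytic continuation. If anything is delicate it is purely bookkeeping — making sure that the statement is read with $\mathbf s\in\Z^n$ a partition, consistent with the domain on which $\phi$ is defined in Section~\ref{S4} — and this can be handled in a single remark rather than a computation.
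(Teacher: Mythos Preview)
Your proposal is correct and follows exactly the paper's approach: the paper's proof is the single sentence ``This lemma is a direct consequence of~\eqref{character-fact} when averaging over $U\in\U(n)$ with respect to the Haar measure,'' which is precisely what you have unpacked. Your additional remarks on Fubini, the absence of an invariance hypothesis on $X$, and the lack of a Jacobian factor are all sound and simply make explicit what the paper leaves implicit.
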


\begin{proof}
This lemma is a direct consequence of~\eqref{character-fact} when averaging over $U\in\U(n)$ with respect to the Haar measure.
\end{proof}

As for the previous two cases, we need an auxiliary random matrix $H_\varepsilon$ to guarantee that all integrals and sums are absolutely convergent. The distribution of this random matrix unitarily invariant with the joint probability density of its eigenvalues $e^{i\mathbf x}=\diag(e^{ix_1},\ldots,e^{ix_n})$ with respect to the measure~\eqref{4.2} is
\begin{equation}\label{4.13a}
\begin{split}
f_{H_\varepsilon}(e^{i\mathbf x})=&\frac{(2\pi)^n}{\mathcal N_\varepsilon} \frac{e^{-i\frac{n-1}{2}\sum_{j=1}^nx_j}}{\Delta(e^{-i\mathbf x})}\det\left[
\left(i\frac{\partial}{\partial x_k}\right)^{j-1} g_{\varepsilon}(x_k)
\right]_{j,k=1}^n\\
=&\frac{(2\pi)^n}{\mathcal N_\varepsilon}  \frac{1}{\prod_{1\leq k<l\leq n}2\sin\left(\frac{x_l-x_k}{2}\right)}\det\left[
\left(-\frac{\partial}{\partial x_k}\right)^{j-1} g_{\varepsilon}(x_k)
\right]_{j,k=1}^n,
\end{split}
\end{equation}
with  $g_\varepsilon$ of Eq.~\eqref{heat-unitary}.  This has been proven~\cite[Proposition 1.1]{LW2016} for the initial condition of the identity matrix $\eins_n$, in particular we need to apply l'H\^opital's rule in~\cite{LW2016}.

When choosing two fixed diagonal matrices of phases $e^{i\mathbf a}=(e^{ia_1},\cdots,e^{ia_n})$ and $e^{i\mathbf b}=(e^{ib_1},\cdots,e^{ib_n})$, and two Haar distributed unitary matrices $U, V\in\U(n)$, we consider the product $C_\varepsilon=H_\varepsilon Ue^{i\mathbf a}U^\dagger Ve^{i\mathbf b}V^\dagger$ which has the spherical transform
\begin{equation}\label{4.12}
\s f_{C_\varepsilon}(\mathbf s)=\s f_{H_\varepsilon}(\mathbf s)\phi(e^{i\mathbf a},\mathbf s)\phi(e^{i\mathbf b},\mathbf s),
\end{equation}
due to~\eqref{lm14}.
Therefore, the joint eigenvalue probability density  of $C=C_0$ is given by
\begin{equation}\label{4.13}
p(e^{i\mathbf c})=\frac{|\Delta(e^{i\mathbf c})|^2}{(2\pi)^n\prod_{j=0}^{n}(j!)^2}\lim_{\varepsilon\rightarrow 0}\sum_{\mathbf s\in\Z^n}\Delta(\mathbf s)^2\,\s f_{H_\varepsilon}(\mathbf s)\,\phi(e^{i\mathbf a},\mathbf s)\phi(e^{i\mathbf b},\mathbf s)\phi(e^{-i\mathbf c},\mathbf s),
\end{equation}
which is the analogue of~\eqref{2.11} and~\eqref{3.112}. Although we have now sums instead of integrals, we run into the same problem as before when considering the most general setting, because of the three Vandermonde determinants $\Delta(\mathbf{s})$ that are hidden in the denominator of the spherical functions that stand against only two in the numerator. By cause of this, we recede to the simplest non-trivial realisation of Horn's problem on $\U(n)$, when $\mathbf{b}$ is rank-$1$.

In the rank-$1$ case, i.e. $e^{i\mathbf b}=(e^{ib},1,\cdots,1)$, the character $\phi(e^{i\mathbf b},\mathbf s)$ reduces to
\begin{equation}\label{4.14}
	\phi(e^{i\mathbf b},\mathbf s)=\frac{(n-1)!}{(1-e^{ib})^{n-1}}\sum_{p=1}^n\frac{e^{ibs_p}}{\prod_{\substack{l=1\\l\ne p}}^n(s_l-s_p)}.
\end{equation}
Hence, the joint density~\eqref{4.13} becomes
\begin{equation}\label{4.15}
\begin{split}
	p(e^{i\mathbf c})=&\frac{(n-1)!}{(2\pi)^n(n!)^2(1-e^{ib})^{n-1}}\frac{\Delta(e^{i\mathbf c})}{\Delta(e^{i\mathbf a})}
	\lim_{\varepsilon\rightarrow 0}\sum_{\mathbf s\in\Z^n}\left(\prod_{j=1}^ne^{-\varepsilon^2(s_j-\frac{n-1}{2})^2}\right)
	\\
	&\times
	\det[e^{-ic_js_k}]_{j,k=1}^n
	\det[e^{ia_js_k}]_{j,k=1}^n
	\sum_{p=1}^n\frac{e^{ibs_p}}{\prod_{\substack{l=1\\l\ne p}}^n(s_l-s_p)},
\end{split}
\end{equation}
where we drop the term  $\prod_{j=}^{n}e^{-\varepsilon^2(j-1-\frac{n-1}{2})^2}$  because it becomes unity in the limit $\varepsilon\to0$. It remains to carry out the series and the limit which we do in two step as for the additive Horn problem. 

\subsubsection*{Step 1: Application of Andr\'eief's identity}

The sum over $p$ can be resolved to the $s_1$-term due to permutation symmetry of the remaining terms in the sum and that the apparent poles at $s_l=s_p$ cancel with the zeros of the two determinants which all come with a multiplicity of at least $2$. Thus, we obtain a factor $n!$  and can exclude the value $s_1$ in the sums of $s_2,\ldots,s_n$. Both modifications allow us to exploit the generalised Andr\'eief formula~\cite[Eqn.~(C.4)]{KG10} for sums which leads to
\begin{equation}\label{4.20}
\begin{split}
p(e^{i\mathbf c})=&-\frac{i^{n-1}e^{i\frac{n-1}{2}(b+\tr(\mathbf{a}-\mathbf{c}))}}{n(e^{ib}-1)^{n-1}}\frac{\Delta(e^{i\mathbf c})}{\Delta(e^{i\mathbf a})}\lim_{\varepsilon\rightarrow 0}\sum_{s_1\in\Z}\,e^{-\varepsilon^2(s_1-\frac{n-1}{2})^2+ib(s_1-\frac{n-1}{2})}
\\
&\hspace*{-1cm}\times\det\begin{bmatrix}
0&\displaystyle\left[e^{-ic_k(s_1-\frac{n-1}{2})}\right]_{k=1,\cdots,n}
\\\\
\displaystyle\left[e^{ia_j(s_1-\frac{n-1}{2})}\right]_{j=1,\cdots,n}&\displaystyle\left[\frac{1}{2\pi i}\sum_{s\ne s_1}\frac{e^{-\varepsilon^2(s-\frac{n-1}{2})^2+i(a_j-c_k)(s-\frac{n-1}{2})}}{s_1-s}\right]_{j,k=1}^n
\end{bmatrix}.
\end{split}
\end{equation}
For the sum in the determinant, we can apply Parseval's theorem because we sum over a product of the two $l^2$-functions $e^{-\varepsilon^2(s-\frac{n-1}{2})^2}$ and $e^{i(a_j-c_k)(s-\frac{n-1}{2})}/(s_1-s)$. Then we get
\begin{equation}\label{pr1}
\begin{split}
&\frac{1}{2\pi i}\sum_{s\ne s_1}\frac{e^{-\varepsilon^2(s-\frac{n-1}{2})^2+i(a_j-c_k)(s-\frac{n-1}{2})}}{s_1-s}
\\
=&\int_{(-\pi,\pi]}\left(\sum_{s\in\Z} e^{-\varepsilon^2(s-\frac{n-1}{2})^2}e^{-ix(s-\frac{n-1}{2})}\right)\,\left(\sum_{s\ne s_1}\frac{e^{i(a_j-c_k)(s-\frac{n-1}{2})}}{s_1-s}e^{ix(s-\frac{n-1}{2})}\right)\frac{\dv x}{2\pi i}\\
=&\int_{(-\pi,\pi]}\left(\sum_{m\in\Z}\frac{(-1)^{(n-1)m}}{2\sqrt{\pi}\varepsilon}e^{-\frac{(x+2m\pi)^2}{4\varepsilon^2}}\right)e^{i(a_j-c_k+x)(s_1-\frac{n-1}{2})}\\
&\times({\rm mod}_{2\pi}(a_j-c_k+x)-\pi)\frac{\dv x}{2\pi}\\
=&\sum_{l\in\Z}\int_{c_k-a_j-2\pi l}^{c_k-a_j-2\pi  (l-1)} \frac{e^{-x^2/4\varepsilon^2+i(a_j-c_k+ x)(s_1-\frac{n-1}{2})}}{2\sqrt{\pi}\varepsilon}(a_j-c_k+x+(2l-1)\pi)\frac{\dv x}{2\pi},
\end{split}
\end{equation}
where we have, in the second equality, employed~\cite[3rd eqn. on Pg. 21]{Bo73}
\begin{equation}
\sum_{y\ne 0}\frac{e^{-i(a_j-c_k+x)y}}{y}={\rm ln}\left(\frac{1-e^{i(a_j-c_k+x)}}{1-e^{-i(a_j-c_k+x)}}\right)=i({\rm mod}_{2\pi}(a_j-c_k+x)-\pi)
\end{equation}
for all $(a_j-c_k-x)\notin 2\pi\Z$ and identified $g_\varepsilon$ so that we could use its second representation in~\eqref{heat-unitary}. In the third one, we have first evaluated the sum which became an integral over the whole real line and, with the ${\rm mod}_{2\pi}$ function then splitting the real line in another set of intervals.

Since $c_k-a_j$ always comes in the combination with $2\pi l$ apart from the phase factor $e^{-i(a_j-c_k)(s_1-\frac{n-1}{2})}$, we can shift this difference to $\Delta_{jk}={\rm mod}_{2\pi}(a_j-c_k)\in(0,2\pi)$ at the expense of a global sign. Then, we sum and integrate over a term with the prefactor ${\rm mod}_{2\pi}(a_j-c_k)+\pi+2i\varepsilon^2(s_1-(n-1)/2)$ of the Gaussian  yielding
\begin{equation}\label{pr2}
\begin{split}
&\sum_{l\in\Z}\int_{{\rm mod}_{2\pi}(a_j-c_k)-2\pi l}^{{\rm mod}_{2\pi}(a_j-c_k)-2\pi  (l-1)} \frac{e^{-x^2/4\varepsilon^2+i(a_j-c_k+ x)(s_1-\frac{n-1}{2})}}{2\sqrt{\pi}\varepsilon}\\
&\times\left({\rm mod}_{2\pi}(a_j-c_k)+\pi+2i\varepsilon^2\left(s_1-\frac{n-1}{2}\right)\right)\frac{\dv x}{2\pi}\\
=&\frac{1}{2\pi}\left({\rm mod}_{2\pi}(a_j-c_k)+\pi+2i\varepsilon^2\left(s_1-\frac{n-1}{2}\right)\right)e^{-\varepsilon^2(s_1-\frac{n-1}{2})^2+i(a_j-c_k)(s_1-\frac{n-1}{2})}.
\end{split}
\end{equation}
The remaining integral will be denoted by
\begin{equation}\label{pr1}
\begin{split}
f(\Delta_{jk};s_1)=&\sum_{l\in\Z}\int_{\Delta_{jk}-2\pi l}^{\Delta_{jk}-2\pi  (l-1)} \frac{e^{-(x-2i\varepsilon^2(s_1-\frac{n-1}{2}))^2/4\varepsilon^2}}{2\sqrt{\pi}\varepsilon}\\
&\times\left(x+2\pi (l-1)-2i\varepsilon^2\left(s_1-\frac{n-1}{2}\right)\right)\frac{\dv x}{2\pi}\\
=&\sum_{l\in\Z}\int_{\Delta_{jk}-2\pi}^{\Delta_{jk}} \hspace*{-0.3cm}\frac{e^{-(x-2\pi l-2i\varepsilon^2(s_1-\frac{n-1}{2}))^2/4\varepsilon^2}}{2\sqrt{\pi}\varepsilon}\left(x-2i\varepsilon^2\left(s_1-\frac{n-1}{2}\right)\right)\frac{\dv x}{2\pi}.
\end{split}
\end{equation}
In this way the density~\eqref{4.20} simplifies to
\begin{equation}\label{4.20a}
\begin{split}
p(e^{i\mathbf c})=&-\frac{i^{n-1}}{n(e^{ib}-1)^{n-1}}\frac{\Delta(e^{i\mathbf c})}{\Delta(e^{i\mathbf a})}\lim_{\varepsilon\rightarrow 0}\sum_{s_1\in\Z}\,
e^{-n\varepsilon^2(s_1+\frac{n-1}{2})^2+is_1(b+\sum_{j=1}^n(a_j-c_j))}
\\
&\hspace*{-1cm}\times\det\begin{bmatrix}
0&\displaystyle\left[1\right]_{1\times n}
\\\\
\displaystyle\left[1\right]_{n\times1} & \displaystyle\left[\frac{{\rm mod}_{2\pi}(a_j-c_k)}{2\pi}+f(\Delta_{jk};s_1)\right]_{j,k=1}^n
\end{bmatrix}.
\end{split}
\end{equation}
The term $[\pi+2i\varepsilon^2\left(s_1-(n-1)/2\right)]/(2\pi)$ has been subtracted either by the first row or column since it is independent of both indices.
What remains is to bound $f(\Delta_{jk};s_1)$ for suitable small $\varepsilon$ and then take the limit $\varepsilon\to0$.

\subsubsection*{Step 2: Limit $\varepsilon\to0$}

Since $\Delta_{jk}\in(0,2\pi)$, we can bound each summand in $f(\Delta_{jk};s_1)$. For $l\geq 1$ we have, up to two positive constants $\gamma_1$ and $\gamma_2$,
\begin{equation}
\begin{split}
&\left|\int_{\Delta_{jk}-2\pi}^{\Delta_{jk}} \frac{e^{-(x-2\pi l-2i\varepsilon^2(s_1-\frac{n-1}{2}))^2/4\varepsilon^2}}{2\sqrt{\pi}\varepsilon}\left(x-2i\varepsilon^2\left(s_1-\frac{n-1}{2}\right)\right)\frac{\dv x}{2\pi}\right|\\
\leq& \frac{\gamma_1+\gamma_2\varepsilon^2s_1}{\varepsilon}e^{\varepsilon^2(s_1-\frac{n-1}{2})^2-(\Delta_{jk}-2\pi l)^2/4\varepsilon^2},
\end{split}
\end{equation}
and for $l\leq-1$ the Gaussian term in $l$ has to be replaced by $e^{-(\Delta_{jk}-2\pi (l+1))^2/4\varepsilon^2}$.
Then, the partial sums of $l=1,2\ldots$ and $l=-1,-2,\ldots$ can be estimated by their corresponding integrals, i.e.
\begin{equation}
\begin{split}
\sum_{l=1}^\infty e^{-(\Delta_{jk}-2\pi l)^2/4\varepsilon^2}\leq & e^{-(\Delta_{jk}-2\pi)^2/4\varepsilon^2}+ \int_1^\infty dl e^{-(\Delta_{jk}-2\pi l)^2/4\varepsilon^2}\\
=&e^{-(\Delta_{jk}-2\pi)^2/4\varepsilon^2}+ \frac{\varepsilon}{2\sqrt{\pi}}{\rm erfc}\left[\left(1-\frac{\Delta_{jk}}{2\pi}\right)\frac{\pi}{\varepsilon}\right]\\
=&\mathcal{O}\left(e^{-(\Delta_{jk}-2\pi)^2/4\varepsilon^2}\right)
\end{split}
\end{equation}
and similarly for other sum,
\begin{equation}
\begin{split}
\sum_{l=-\infty}^{-1}e^{-(\Delta_{jk}-2\pi (l+1))^2/4\varepsilon^2}=&\mathcal{O}\left( e^{-\Delta_{jk}^2/4\varepsilon^2}\right).
\end{split}
\end{equation}
These estimates need the fact that $\Delta_{jk}$ stays away from $0$ and $2\pi$ for all combinations of $j,k=1,\ldots,n$. When we define $\alpha_{jk}=\min\{\Delta_{j,k},2\pi-\Delta_{j,k}\}$, the contribution of both sums can be written as
\begin{equation}\label{proof.b}
\begin{split}
&\left|\sum_{l\neq0}\int_{\Delta_{jk}-2\pi}^{\Delta_{jk}} \hspace*{-0.3cm}\frac{e^{-(x-2\pi l-2i\varepsilon^2(s_1-\frac{n-1}{2}))^2/4\varepsilon^2}}{2\sqrt{\pi}\varepsilon}\left(x-2i\varepsilon^2\left(s_1-\frac{n-1}{2}\right)\right)\frac{\dv x}{2\pi}\right|\\
=&\mathcal{O}\left(\frac{\gamma_1+\gamma_2\varepsilon^2s_1}{\varepsilon}e^{\varepsilon^2(s_1-\frac{n-1}{2})^2-\alpha_{jk}^2/4\varepsilon^2}\right).
\end{split}
\end{equation}

What remains to be estimated is the term $l=0$, which, however, can be evaluated directly since it is a total derivative,
\begin{equation}\label{proof.c}
\begin{split}
&\left|\int_{\Delta_{jk}-2\pi}^{\Delta_{jk}} \frac{e^{-(x-2i\varepsilon^2(s_1-\frac{n-1}{2}))^2/4\varepsilon^2}}{2\sqrt{\pi}\varepsilon}\left(x-2i\varepsilon^2\left(s_1-\frac{n-1}{2}\right)\right)\frac{\dv x}{2\pi}\right|\\
\leq& \frac{\varepsilon}{2\pi^{3/2}}\left(e^{-\Delta_{jk}^2/4\varepsilon^2}+e^{-(\Delta_{jk}-2\pi)^2/4\varepsilon^2}\right)e^{\varepsilon^2(s_1-\frac{n-1}{2})^2}\\
=&\mathcal{O}\left(\varepsilon e^{\varepsilon^2(s_1-\frac{n-1}{2})^2-\alpha_{jk}^2/4\varepsilon^2}\right).
\end{split}
\end{equation}
We combine this with~\eqref{proof.b} to find the bound of $f(\Delta_{jk};s_1)$ which is
\begin{equation}\label{4.28}
f(\Delta_{jk};s_1)=\mathcal{O}\left(\frac{\gamma_1+\gamma_2\varepsilon^2s_1}{\varepsilon}e^{\varepsilon^2(s_1-\frac{n-1}{2})^2-\alpha_{jk}^2/4\varepsilon^2}\right).
\end{equation}

Substituting~\eqref{4.28} into the $s_1$-sum in~\eqref{4.20}, we can expand the determinant in the correction $f(\Delta_{jk};s_1)$. All terms that involve this function have the form and estimate
\begin{equation}
\begin{split}
&\left|\sum_{s_1\in\Z}\,e^{-n\varepsilon^2(s_1-\frac{n-1}{2})^2+ibs_1}h(\mathbf{a},\mathbf{c})\prod_{l=1}^Lf(\Delta_{j_lk_l};s_1)\right|\\
\leq&|h(\mathbf{a},\mathbf{c})|\sum_{s_1\in\Z}\,e^{-(n-L)\varepsilon^2(s_1-\frac{n-1}{2})^2}\mathcal{O}\left(\frac{(\gamma_1+\gamma_2\varepsilon^2s_1)^L}{\varepsilon^L}e^{-L\alpha_{jk}^2/4\varepsilon^2}\right)\overset{\varepsilon\to0}{\rightarrow}0,
\end{split}
\end{equation}
where $h(\mathbf{a},\mathbf{c})$ only depends on $\mathbf{a}$ and $\mathbf{c}$ but not on $s_1$ or $\varepsilon$ and $L=1,\ldots,n-1$. The limit follows from the facts that $\min_{j,k}\{\alpha_{jk}\}>0$ and that the series have a finite limit when $\varepsilon\to0$.

Therefore the joint eigenvalue density takes the form
\begin{equation}\label{1.3a}
\begin{split}
p(e^{i\mathbf c})=&-\frac{i^{n-1}}{n(e^{ib}-1)^{n-1}}\frac{\Delta(e^{i\mathbf c})}{\Delta(e^{i\mathbf a})}\, \delta\left(b+\sum_{j=1}^n(a_j-c_j)\right)\\
&\times\det\begin{bmatrix}
0&\displaystyle[1]_{1\times n}
\\
\displaystyle[1]_{n\times 1}&\displaystyle\left[\frac{\mathrm{mod}_{2\pi}(a_j-c_k)}{2\pi}\right]_{j,k=1}^n\end{bmatrix}.
\end{split}
\end{equation}
We obtain the claim~\eqref{1.3} by considering a particular order, requiring a factor $n!$ to compensate. Additionally, we subtract with the first row times $a_j/(2\pi)$ and add with the first column times $c_k/(2\pi)$, when choosing $a_j,c_k\in(-\pi,\pi]$. Indeed, we have
\begin{equation}
\frac{\mathrm{mod}_{2\pi}(a_j-c_k)-a_j+c_k}{2\pi}=\Theta(c_k-a_j)
\end{equation}
when choosing only one period of $2\pi$ for  the length of an interval where the angles are drawn from. This finishes the proof.

\section{Conclusion}\label{S5}

In the present work, we have applied spherical transforms to the rank-$1$ randomised Horn problems in three settings, namely the three standard realizations of the symmetric spaces of the Lie algebra $\mathfrak u(N)$:
the Lie algebra itself as the additive group on $\mathrm{Herm}(n)$, the multiplicative action on the negatively curved space $\mathrm{Herm}_+(n)=\exp[\mathrm{Herm}(n)]$ and the multiplicative group $\U(n)=\exp[i\mathrm{Herm}(n)]$. In so doing, two important advancements have been achieved.

Firstly, we computed the analogue of the transition probability density to the rank-$1$ Horn problem on ${\rm Herm}_+(n)$, which has not been done before. The result looks strikingly similar to the corresponding additive Horn problem on ${\rm Herm}(n)$. Moreover, our approach outlined how to tackle more general randomised Horn problems and where the difficulties arise which have to be overcome for their solution.

Secondly, we laid out an approach to consider nontrivial random matrix ensembles on $\U(n)$, where one can study the product of random matrices on a one-dimensional domain, analogous to multiplication on ${\rm Herm}_+(n)$. Here
$\U(n)$ has the advantage that it is a group and a compact manifold. In particular, there is a unique uniform measure which is the Haar measure. Hence, we can expect new and different effects for products on such matrices.

Another intriguing question is how the results change when considering the other three classical Lie algebras and their three different realizations: odd and even dimensional spaces of anti-symmetric real matrices, and space of even-dimensional Hermitian self-dual matrices. One can already guess that then the new simplest non-trivial Horn-problem is the rank-$2$ case due to existing Weyl reflections, as known in the additive case \cite{FILZ17}. Nonetheless, we think they are still analytical feasible because they share similar algebraic structures like determinantal point processes with ${\rm Herm}(n)$. Relevant studies can be found in~\cite{CRZ20}.

\section*{Acknowledgements}

The work of PJF and JZ was supported by the Australian Research Council (ARC) through the ARC Centre of Excellence for Mathematical and Statistical frontiers (ACEMS). PJF also acknowledges partial support from ARC grant DP170102028, and JZ acknowledges the support of a Melbourne postgraduate award, and an ACEMS top up scholarship. We are grateful for fruitful discussions with Jesper R.~Ipsen. We also thank Vadim Gorin for correspondence leading to the penultimate
paragraph of the Introduction. And we also thank the referees for valuable comments.

\section*{Conflict of interest statement}
On behalf of all authors, the corresponding author states that there is no conflict of interest.

\providecommand{\bysame}{\leavevmode\hbox to3em{\hrulefill}\thinspace}
\providecommand{\MR}{\mathbb{R}elax\ifhmode\unskip\space\fi MR }
\providecommand{\MRhref}[2]{%
  \href{http://www.ams.org/mathscinet-getitem?mr=#1}{#2}
}
\providecommand{\href}[2]{#2}

\end{document}